\theoremstyle{thmstyleone}%
\newtheorem{lemma}{Lemma}
\newtheorem{theorem}{Theorem}%  meant for continuous numbers
\theoremstyle{thmstyletwo}%
\newtheorem{example}{Example}%
\newtheorem{remark}{Remark}%
\theoremstyle{thmstylethree}%
\newcommand{\Fp}{\mathbb{F}_p}%Finite field F_p
\newcommand{\Fq}{\mathbb{F}_q}%Finite field F_q
\DeclareMathOperator*{\Wt}{wt}
\DeclareMathOperator{\Tr}{tr}
\DeclarePairedDelimiter\nearest{\lfloor}{\rceil}
\newcommand{\Review}{}
\begin{document}

\title[Optimal $(2,\delta)$ Locally Repairable Codes via Punctured Simplex Codes]{Optimal $(2,\delta)$ Locally Repairable Codes via Punctured Simplex Codes\footnote{This paper was presented in part at the 2023 IEEE International Symposium on Information Theory (ISIT), \cite{Wang2023optimal}.}\footnote{Corresponding author: Weijun Fang (fwj@sdu.edu.cn)}}

%%=============================================================%%
%% Prefix	-> \pfx{Dr}
%% GivenName	-> \fnm{Joergen W.}
%% Particle	-> \spfx{van der} -> surname prefix
%% FamilyName	-> \sur{Ploeg}
%% Suffix	-> \sfx{IV}
%% NatureName	-> \tanm{Poet Laureate} -> Title after name
%% Degrees	-> \dgr{MSc, PhD}
%% \author*[1,2]{\pfx{Dr} \fnm{Joergen W.} \spfx{van der} \sur{Ploeg} \sfx{IV} \tanm{Poet Laureate} 
%%                 \dgr{MSc, PhD}}\email{iauthor@gmail.com}
%%=============================================================%%

\author[1,2]{\fnm{Yuan} \sur{Gao}}\email{gaoyuan862023@163.com}

\author[1,2,3]{\fnm{Weijun} \sur{Fang}}\email{fwj@sdu.edu.cn}
%\equalcont{These authors contributed equally to this work.}
\author[4]{\fnm{Jingke} 
\sur{Xu}}\email{xujingke@sdau.edu.cn}

\author[1,2]{\fnm{Dong} \sur{Wang}}\email{d\_wang@mail.sdu.edu.cn}
%\equalcont{These authors contributed equally to this work.}

\author[1,2,3]{\fnm{Sihuang} \sur{Hu}}\email{husihuang@sdu.edu.cn}
%\equalcont{These authors contributed equally to this work.}

\affil[1]{\orgdiv{Key Laboratory of Cryptologic Technology and Information Security}, \orgname{Ministry of Education, Shandong University}, \orgaddress{\city{Qingdao}, \postcode{266237}, \country{China}}}

%\affil[2]{\orgdiv{Department}, \orgname{Organization}, \orgaddress{\street{Street}, \city{City}, \postcode{10587}, \state{State}, \country{Country}}}

\affil[2]{\orgdiv{School of Cyber Science and Technology}, \orgname{Shandong University}, \orgaddress{\city{Qingdao}, \postcode{266237}, \country{China}}}

\affil[3]{\orgname{Quancheng Laboratory}, \orgaddress{\city{Jinan}, \postcode{250103}, \country{China}}}

\affil[4]{\orgname{School of Information Science and Engineering, Shandong Agricultural University}, \orgaddress{\city{Tai'an}, \postcode{271018}, \country{China}}}

%%==================================%%
%% sample for unstructured abstract %%
%%==================================%%

\abstract{Locally repairable codes (LRCs) have attracted a lot of attention due to their applications
	in distributed storage systems. In this paper, we provide new constructions of optimal $(2, \delta)$-LRCs over $\Fq$ with flexible parameters. 
	Firstly, employing techniques from finite geometry, we introduce a simple yet useful condition to ensure that a punctured simplex code becomes a $(2, \delta)$-LRC. It is worth noting that this condition only imposes a requirement on the size of the puncturing set.
	Secondly, utilizing character sums over finite fields and Krawtchouk polynomials,  we determine the parameters of more punctured simplex codes with puncturing sets of new structures. Several infinite families of LRCs with new parameters are derived.
	All of our new LRCs are optimal with respect to the generalized Cadambe-Mazumdar bound and some of them are also Griesmer codes or distance-optimal codes.}

\keywords{Distributed storage systems, Locally repairable
	codes, C-M bound, Character sums, Griesmer codes}
%%\pacs[JEL Classification]{D8, H51}

\pacs[MSC Classification]{94B60, 11T71}

\maketitle

\section{Introduction}
\label{sec:intro}

In order to ensure the reliability of nodes in large-scale distributed
storage systems, the concept of locally repairable codes was first proposed in \cite{rLRC}.
Let $[n]=\{1,2,...,n\}$, for a linear code $\mathcal{C}$ of length $n$ over the finite field $\Fq$, a code symbol $c_i$ of $\mathcal{C}$ has locality $r$
if there exists a subset $R_i\subseteq [n]$ such that $i\in R_i,|R_i|\leq r+1$
and $c_i$ is a linear combination of $\{c_j\}_{j\in R_i\backslash\{i\}}$ over $\Fq$.
If each symbol of a codeword in $C$ has locality $r$, then $C$ is called
a locally repairable code with locality $r$ or an $r$-LRC. However, when multiple node failures happen in a distributed storage
system, the $r$-LRCs can not recover failed nodes efficiently. To address this problem, Prakash \textit{et al.} \cite{rdeltaLRC} extended the concept of $r$-LRCs to
$(r,\delta)$-LRCs which can tolerate any $\delta-1$ erasures. A code symbol $c_i$ of $C$ has locality
$(r,\delta)$ if there exists a subset $R_i\subseteq [n]$ such that $i\in R_i,|R_i|\leq r+\delta-1$
and $d(\mathcal{C}|_{R_i})\geq\delta$, where $\mathcal{C}|_{R_i}$ is the punctured code on the set $[n]\backslash R_i$.
The code $\mathcal{C}$ is called an $(r,\delta)$-LRC if all code symbols have locality $(r,\delta)$. Obviously when $\delta=2,(r,\delta)$-LRCs reduce to $r$-LRCs.

\subsection{Known Results about $(r,\delta)$-LRCs}
\label{sec:intro_known}
In \cite{rdeltaLRC}, analogous to the classical Singleton bound for general codes, the following Singleton-type bound
for an $(r,\delta)$-LRC with parameters $[n,k,d]$ is given as
\begin{equation}
	d\leq n-k+1-(\left\lceil k/r\right\rceil -1)(\delta -1).\label{singletonBound}
\end{equation}
If an $(r,\delta)$-LRC achieves the Singleton-type bound \eqref{singletonBound} with equality, then the code
is called a Singleton-optimal $(r,\delta)$-LRC. Due to its interesting
algebraic structures and 
practical applications in distributed storage systems, several constructions of Singleton-optimal $(r,\delta)$-LRCs
have been proposed in \cite{CaiHan2020bound,LJin,Guruswami2019how,TamoRSLRC,LRCMatrixProduct,BChenLRC_d5_6,WFangLRCcyclic,WFangLRCsmallLocality}. 
However, as shown in \cite[Theorem 10]{Guruswami2019how} and \cite[Theorem 2]{CaiHan2020bound}, the code length of a $q$-ary Singleton-optimal $(r,\delta)$-LRC with minimum distance $d\textgreater 2\delta$
is always upper bounded by $q^{C_{r,\delta}O(\frac{1}{d})}$, where $C_{d,r}$ is a  constant related to $r$ and $\delta$. In other words, the Singleton-type bound is no longer a suitable choice to evaluate an $(r,\delta)$-LRC with large code length over a small alphabet. The following upper bound that takes the field size into account provides a good alternative in this case.

In \cite{kupperBoundrLRC}, Cadambe and Mazumdar derived the first field-dependent bound for $q$-ary $r$-LRCs with parameters $[n,k,d]$,
\begin{equation}\label{eq:CMbound}
    k\leq\min_{\tau \in\mathbb{Z}_+}\{\tau r+k_{opt}^{(q)}(n-\tau (r+1),d)\},
\end{equation}
where \Review{$\mathbb{Z}_+$ denotes the set of all positive integers including $0$} and $k_{opt}^{(q)}(n,d)$ is the maximum dimension of a $q$-ary linear code of length $n$ and minimum distance $d$. \Review{An $r$-LRC meeting the Cadambe-Mazumdar bound \eqref{eq:CMbound} is called a $k$-optimal $r$-LRC.} In \cite{kupperBoundrLRC}, the authors proved that the binary simplex code is a $k$-optimal $2$-LRC. 
%As for the $(r,\delta)$-LRCs, the generalized Cadambe-Mazumdar bound was considered in \cite{kupperBoundrDeltaLRC}. It stated that a $q$-ary $(r,\delta)$-LRC with parameters $[n,k,d]$ has to satisfy that
As for the $(r,\delta)$-LRCs, the generalized Cadambe-Mazumdar bound was considered in \cite{kupperBoundrDeltaLRC}. It stated \Review{that a} $q$-ary $(r,\delta)$-LRC with parameters $[n,k,d]$ has to satisfy that
\begin{equation}
	k\leq\min_{\tau\in\mathbb{Z}_+}\{\tau r+k_{opt}^{(q)}(n-\tau (r+\delta-1),d)\}.\label{CMbound_rdeltaLRC}
\end{equation}
We refer to a code that attains the generalized C-M bound \eqref{CMbound_rdeltaLRC} with equality as a $k$-optimal $(r,\delta)$-LRC.
 By deleting some columns from
the generator matrix of the simplex code, several new families of $k$-optimal LRCs with localities 2 or 3 were proposed in \cite{Anticode} and
\cite{AnticodeBin}. In \cite{LuoConstrkoptimalBin}, Luo and Cao presented several binary $k$-optimal 2-LRCs by deleting or adding some columns from a binary simplex code and used character sums to determine their parameters.
Motivated by works of \cite{optBinCodesimplicial}, Luo and Ling \cite{Luopary} constructed a family of $p$-ary
linear codes and demonstrated that they are $k$-optimal 2-LRCs in some cases. Tan \textit{et al.} \cite{minloc}
determined the locality of some known linear codes and showed that many of these codes are $k$-optimal.

\subsection{Our Contributions and Techniques}
\label{sec:intro_contri}
In this paper, we focus on new constructions of $k$-optimal $(2, \delta)$-LRCs via punctured simplex codes. We follow the construction of linear codes presented in \cite{constrLinearCode}. This construction has been applied to secret sharing schemes or LRCs \cite{sss,LuoConstrkoptimalBin,Luopary}.
We summarize the results and techniques of this paper as follows.

\begin{itemize}
\item[1.] \Review{Determining the locality is an important issue in the construction of optimal LRCs via punctured simplex codes.
In \cite{Luopary}, Luo and Ling proved the $2$-locality of a family of punctured simplex codes by explicitly finding a repair group $R_{i}$ for each code symbol $c_i$ ($i\in [n]$). The advantage of this method is that the repair groups are explicitly given. However, this method may not work when the structure of the puncturing set changes and is difficult to find potentially stronger $(2,\delta)$-locality ($\delta\geq 2$). From the perspective of finite geometry, we transform the analysis of the $(2,\delta)$-locality of a code symbol (point) of a punctured simplex code into determining whether there exists a line passing through the point in the projective space that contains enough points out of the puncturing set (see Lemma \ref{lem4}). Combining it with the Pigeonhole Principle, we give a simple but useful sufficient condition to guarantee a punctured simplex code to be a $(2,\delta)$-LRC (see Theorem \ref{suffi_condi_lrc}). It turns out that the $(2,\delta)$-locality of a punctured simplex code can be determined by the size of the puncturing set. This sufficient condition not only allows us to reveal that many known 2-LRCs via punctured simplex codes actually have stronger $(2,\delta)$-locality,
but also allows us not to focus too much on the structure of the puncturing set to ensure the locality (instead, we consider the size of the puncturing set). Consequently, we can explore more punctured simplex codes with puncturing sets of various structures. More constructions of optimal $(2,\delta)$-LRCs via punctured simplex codes are derived as follows.}
\item[2.] We generalize some results proposed by Luo and Ling \cite{Luopary} (see Theorems \ref{thm:generaliz_luo}, \ref{thm:generaliz_luo_loc} and \ref{thm:gen_luo_kopt}), Silberstein and Zeh \cite{Anticode} (see Theorem \ref{thm:weight2}), Luo and Cao \cite{LuoConstrkoptimalBin} (see Theorem \ref{thm:wt1and2}). In particular, we extend the $p$-ary linear codes presented in \cite{Luopary} to the $q$-ary linear codes, where
$p$ is a prime and $q$ is the power of $p$, and show that they have $(2,q)$-locality or $(2, q-1)$-locality.
We utilize character sums and Krawtchouk polynomials to determine the parameters of some punctured simplex codes with certain puncturing set. Specifically speaking, if the punctured
columns from the generator matrix of a simplex code have certain weight, then determining the minimum distance of the punctured simplex code
is equivalent to determining the minimum value of Krawtchouk polynomials. 
Consequently, we construct several infinite families of $k$-optimal $(2,q)$-LRCs. 
We not only generalize the results of \cite{Luopary}, \cite{Anticode} and \cite{LuoConstrkoptimalBin}, but also provide some new classes of $k$-optimal LRCs with flexible parameters. 
\end{itemize}
All our new LRCs are $k$-optimal with respect to the generalized C-M bound. Some of these codes are also Griesmer codes or distance-optimal codes.
In Table \ref{table:parameters}, we list the parameters of these codes and their optimality, where the dimensions of the codes are all $m$.
 \begin{sidewaystable}
%\begin{table}
	\caption{The parameters of some LRCs and their optimality}
	\label{table:parameters}
	\begin{tabular}{|c|c|c|c|c|c|}%{@{} l *{5}{C} c @{}}
		\hline
		 Code length                         &         Minimum distance            &        Locality              &    \makecell{Griesmer\\ optimality} &        $k$-optimality           & References 
		  \\ \hline %Row 1
	     $\frac{p^m-1}{p-1}-\sum_{i=1}^{t}\frac{p^{\Review{|A_i|}}-1}{p-1}$         & $p^{m-1}-\sum_{i=1}^{t}p^{\Review{|A_i|}-1}$   &  $(2,2)$ &  \checkmark  &  \checkmark & \cite[Theorem 3.1, 3.2, 4.3]{Luopary} 
	      \\ \hline %Row 2
		 $\frac{q^m-1}{q-1}-\sum_{i=1}^{t}\frac{q^{\Review{|A_i|}}-1}{q-1}$         & $q^{m-1}-\sum_{i=1}^{t}q^{\Review{|A_i|}-1}$   &  \makecell{$(2,q)$,\\$(2,q-1)$} &  \checkmark  &  \checkmark & Theorem \ref{thm:generaliz_luo}, \ref{thm:generaliz_luo_loc}, \ref{thm:gen_luo_kopt}     
		 \\ \hline  %Row 3
	     $\frac{q^m-1}{q-1}-\binom{s}{2}(q-1)$    & \makecell{$q^{m-1}-\frac{(q-1)^2}{q}\binom{s}{2}+$\\ ${K_2(\nearest{s-\frac{1}{2}+\frac{1-s}{q}};s,q)}/{q}$}  & $(2,2)$  &     & \makecell{\checkmark ($2\leq q\leq 14$)}   &\cite[Theorem 10]{Anticode}    
	     \\ \hline   %Row 4
		 $\frac{q^m-1}{q-1}-\binom{s}{2}(q-1)$     & \makecell{$q^{m-1}- \frac{(q-1)^2}{q}\binom{s}{2}+$\\ ${K_2(\nearest{s-\frac{1}{2}+\frac{1-s}{q}};s,q)}/{q}$}  & $(2,q)$  &     & \checkmark  &Theorem \ref{thm:weight2}      
		 \\ \hline   %Row 5
		 $\frac{q^m-1}{q-1}-\binom{s}{2}(q-1)-s$        & \makecell{$q^{m-1}- \frac{(q-1)^2}{q}\binom{s}{2}-\frac{(q-1)s}{q}$\\ $+{K_1(\nearest{s-\frac{1}{2}+\frac{2-s}{q}};s,q)}/{q}$\\$+{K_2(\nearest{s-\frac{1}{2}+\frac{2-s}{q}};s,q)}/{q}$}   & $(2,q)$ &     & \checkmark  &  Theorem \ref{thm:wt1and2}   
		   \\ \hline  %Row 6
	     $\frac{q^m-1}{q-1}-\frac{q^s-1}{q-1}+s$     & $q^{m-1}-q^{s-1}+1$    & $(2,q)$  & \checkmark  & \checkmark  & Theorem \ref{thm:except1}       
	       \\ \hline  %Row 7
		 $\frac{q^m-1}{q-1}-\sum_{i=1}^{t}(\frac{q^{s_i}-1}{q-1}-s_i)$                & $q^{m-1}-\sum_{i=1}^{t}q^{s_i-1}+t$   & $(2,q)$ &     & \checkmark & Theorem \ref{thm:except1multi}   
		  \\ \hline   %Row 8
         $\frac{3^m-1}{2}-7$ ($q=3$)              & $3^{m-1}-6$   & $(2,3)$  &     & \checkmark  &Theorem \ref{thm:s=3except2-complex}   
         \\ \hline    %Row 9
		 $\frac{q^m-1}{q-1}-\frac{q^3-1}{q-1}+3(q-1)$ ($q\geq 4$)              & $q^{m-1}-q^2+2q-2$   & $(2,q)$  & \checkmark($q=4$)    & \checkmark  &Theorem \ref{thm:s=3except2-complex}    
		  \\ \hline  %Row 10
		 $\frac{q^m-1}{q-1}-\frac{q^4-1}{q-1}+2(q-1)$ ($q\geq 3$)          & $q^{m-1}-q^3+q-2$   & $(2,q)$  &   \checkmark ($q=3$)  & \checkmark  & Theorem \ref{thm:s=4except2-complex}    
		 \\ \hline  %Row 11
		 $\frac{q^m-1}{q-1}-\sum_{i=1}^{t}\bigg((q-1)\binom{s_i}{2}\bigg)$              & \makecell{$q^{m-1}-\frac{(q-1)^2}{q}\sum_{i=1}^t\binom{s_i}{2}$\\$+{\sum_{i=1}^{t}K_2(\nearest{s_i-\frac{1}{2}+\frac{1-s_i}{q}};s_i,q)}/{q}$}    & $(2,q)$   &      & \checkmark & Theorem \ref{thm:wt2mult}      
		  \\ \hline %Row 12
	     $\frac{q^m-1}{q-1}-\sum_{i=1}^{t}\bigg((q-1)\binom{s_i}{2}+s_i\bigg)$   & \makecell{$q^{m-1}-{\sum_{i=1}^t((q-1)^2\binom{s_i}{2}+(q-1)s_i)}/{q}$\\$+{\sum_{i=1}^{t}K_2(\nearest{s_i-\frac{1}{2}+\frac{2-s_i}{q}};s_i,q)}/{q}$\\$+{\sum_{i=1}^{t}K_2(\nearest{s_i-\frac{1}{2}+\frac{2-s_i}{q}};s_i,q)}/{q}$}    & $(2,q)$  &     & \checkmark & 	Theorem \ref{thm:wt12mult}  
	       \\ \hline
	\end{tabular}
%\end{table}
 \end{sidewaystable}

The rest of this paper is organized as follows. In Section 2, we recall a general construction of linear codes given by Ding and Niederreiter \cite{constrLinearCode}, and some basic notations and results on finite geometry and Krawtchouk polynomials.
In Section 3, we consider $(2,\delta)$-LRCs and present several infinite families of $k$-optimal ($2,\delta$)-LRCs, some of which are also Griesmer codes or distance-optimal codes.
Section 4 concludes the paper. We remark that the parameters of all codes with explicit generator matrix presented in this paper are verified by Magma programs.

\section{Preliminaries}
\label{sec:preliminaries}
In this section, we present some preliminaries which will
be used in the subsequent sections.
Starting from now on, we adopt the following notations:
\begin{itemize}
	\item  We use $\left\lfloor \cdot \right\rfloor$ and $\left\lceil \cdot \right\rceil$ to denote the floor function and the ceiling function, respectively.
	For $x\in \mathbb{R}$,  we define 
 \begin{align}\label{def:nearest}
		\nearest{x}:=\begin{cases}
			\left\lfloor x\right\rfloor, &\text{ if }x-\left\lfloor x \right\rfloor\leq 0.5,\\
			\left\lceil x\right\rceil, & \text{ otherwise }.
		\end{cases}
\end{align}
	\item We use $[a,b]$ to denote the closed interval $\{x\in \mathbb{R}:a\leq x\leq b\}$, where $a\leq b\in \mathbb{R}$.
	\item Let $m$ be a positive integer. We use $[m]$ to denote the set $\{1, 2, \cdots, m\}$.
 \item For some $\bm{x}=(x_1,x_2,\dots,x_m)\in \Fq^{m}$ and non-empty set $A=\{i_1,i_2,\dots,i_s\}\subseteq [m]$, where $i_1\textless i_2\textless\cdots\textless i_s$, we use $\bm{x}_{A}$ to denote the projection of $\bm{x}$ on $A$, i.e., 
 \begin{align}\label{def:projection}
		\bm{x}_A:={(x_{i_1},x_{i_2},\dots,x_{i_s})}.
	\end{align}
\end{itemize}
\subsection{A General Construction of Linear Codes}
\label{sec:pre_constr_lineaer_codes}
In this subsection, we describe a general construction of linear code which was given by Ding and Niederreiter \cite{constrLinearCode}. 
Let $m$ be a positive integer, $q$ a power of some prime $p$,
$\mathbb{F} _{q}$ the finite field containing $q$ elements and $\Fq^m$ the vector space over $\Fq$ of dimension $m$. For any vector $\bm{x}$=
$(x_1,x_2,\cdots,x_m)\in\Fq^m$, the Hamming weight of $\bm{x}$ is given as wt$(\bm{x})=|\{1\leq i\leq m:x_i\neq 0\}|$. We let tr$_{q^m/q}(\cdot)$ be the trace function from $\mathbb{F}_{q^m}$ to $\Fq$ and tr$(\cdot)$ the absolute trace function from $\Fq$ to $\Fp$.

Ding and Niederreiter \cite{constrLinearCode} established a general construction of linear codes, which says that if $D$ is a proper subset of $\mathbb{F} _{q^m}$ and denote $D^c=\mathbb{F} _{q^m}\backslash D=\{d_1,d_2,\cdots,d_n\}$,
a $q$-ary linear code of length $n$ is constructed by
\begin{equation}
	\mathcal{C}_{D^c}=\{\bm{c}_x=(\Tr_{q^m/q}(xd_1),\cdots,\Tr_{q^m/q}(xd_n)):x\in\mathbb{F} _{q^m}\}.\label{gen_constr}
\end{equation}
If $D$ is a proper subset of $\mathbb{F}^m_{q}$ and denote $D^c=\mathbb{F}^m_{q}\backslash D=\{\bm d_1, \bm d_2, \cdots, \bm d_n\}$, then the above construction (\ref{gen_constr}) can be modified to
\begin{equation}
	\mathcal{C}_{D^c}=\{\bm{c_x}=(\bm{x\cdot d}_1,...,\bm{x\cdot d}_n):\bm{x}\in\Fq^m\},\label{gen_linear_constr}
\end{equation}
where $\bm{x\cdot d}_i$ is the Euclidean inner product of $\bm{x}$ and $\bm{d}_i$. 

The following two bounds are useful in subsequent sections.
\begin{lemma}[\cite{IntroToCoding}, Griesmer Bound]
	Let $C$ be a $q$-ary $[n,k,d]$ linear code, then 
 \begin{align*}%\label{eq:Griesmer}
     n\geq\sum_{i=0}^{k-1}\left\lceil \frac{d}{q^i}\right\rceil.
  \end{align*}
\end{lemma}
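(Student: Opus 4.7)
The plan is to prove the Griesmer bound by induction on $k$, using the standard residual-code construction. For the base case $k=1$, a nonzero codeword has length $n$ and weight at most $n$, so $n\geq d=\lceil d/q^0\rceil$, matching the required inequality.

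For the inductive step, suppose the bound holds for all codes of dimension less than $k$ and let $C$ be a $q$-ary $[n,k,d]$ code. First I would pick a codeword $\bm{c}\in C$ of weight exactly $d$ and, after a coordinate permutation, assume its support is the first $d$ positions. Then I would form the residual code $\operatorname{Res}(C,\bm{c})$ obtained by restricting every codeword of $C$ to the last $n-d$ coordinates; equivalently, puncturing on the support of $\bm{c}$. A short argument (using that $\bm{c}$ spans a $1$-dimensional subspace that is killed by this projection, together with a dimension count showing the projection is otherwise injective on a complement) yields a linear code of length $n-d$ and dimension $k-1$.

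The main obstacle, and the only non-routine step, is proving that the residual code has minimum distance $d'\geq\lceil d/q\rceil$. The idea is an averaging argument over the $q$ cosets $\bm{v}+\lambda\bm{c}$ for $\lambda\in\Fq$, where $\bm{v}\in C$ is a codeword whose projection to the last $n-d$ coordinates has weight $d'$. Each of these $q$ codewords has weight at least $d$, the contribution from the last $n-d$ coordinates is exactly $qd'$, and the contribution from the first $d$ coordinates (where $\bm{c}$ is nonzero) is at most $d(q-1)$ because for every fixed position at most $q-1$ of the $q$ values of $\lambda$ leave the entry nonzero. Summing gives $qd\leq qd'+(q-1)d$, hence $d'\geq d/q$, and since $d'$ is an integer $d'\geq\lceil d/q\rceil$.

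Finally I would invoke the inductive hypothesis on the residual $[n-d,k-1,d']$-code, obtaining
\[
n-d\;\geq\;\sum_{i=0}^{k-2}\left\lceil\frac{d'}{q^i}\right\rceil\;\geq\;\sum_{i=0}^{k-2}\left\lceil\frac{\lceil d/q\rceil}{q^i}\right\rceil\;=\;\sum_{i=1}^{k-1}\left\lceil\frac{d}{q^i}\right\rceil,
\]
where the last equality uses the nesting identity $\lceil\lceil x/a\rceil/b\rceil=\lceil x/(ab)\rceil$ for positive integers $a,b$. Adding $d=\lceil d/q^0\rceil$ to both sides yields $n\geq\sum_{i=0}^{k-1}\lceil d/q^i\rceil$, completing the induction.
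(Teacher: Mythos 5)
The paper does not prove this lemma; it is quoted verbatim as a known result and attributed to the textbook \cite{IntroToCoding}, so there is no ``paper proof'' to compare against. Your argument is the standard residual-code proof of the Griesmer bound and it is correct: the base case $k=1$ is immediate, the averaging over the $q$ codewords $\bm{v}+\lambda\bm{c}$ (these are codewords, not cosets, as you wrote) gives $qd\le qd'+(q-1)d$ and hence $d'\geq\lceil d/q\rceil$, and in fact this same inequality already forces the projection to have trivial kernel on any complement of $\langle\bm{c}\rangle$ (if the projection of $\bm{v}\notin\langle\bm{c}\rangle$ were zero you would get $qd\le(q-1)d$, a contradiction), so your separate ``dimension count'' remark is subsumed by the averaging step. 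The nesting identity $\lceil\lceil x/a\rceil/b\rceil=\lceil x/(ab)\rceil$ is the right tool to collapse the inductive sum. In short, the proposal is a correct, self-contained proof of a fact the paper merely cites.
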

A $q$-ary linear code achieving the Griesmer bound with equality is called a Griesmer code. A $q$-ary [n,k,d] linear code is called distance-optimal if no  $q$-ary $[n,k,d+1]$ linear code exists.
\begin{lemma}[\cite{IntroToCoding}, Plotkin Bound]
	Let $\mathcal{C}$ be a $q$-ary code with $M$ codewords, length $n$ and minimum distance $d$. If $qd>(q-1)n$, then 
\begin{align*}%\label{eq:Plotkin}
 M\leq\frac{qd}{qd-(q-1)n}.
 \end{align*}
\end{lemma}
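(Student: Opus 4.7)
The plan is a standard double-counting argument on the sum of pairwise Hamming distances. Let $S=\sum_{\bm{c}\neq\bm{c}'}d(\bm{c},\bm{c}')$, where the sum ranges over all ordered pairs of distinct codewords of $C$. I would first bound $S$ from below using the minimum distance, then bound it from above by exchanging the order of summation and analyzing each coordinate separately, and finally combine the two inequalities.

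For the lower bound, each of the $M(M-1)$ ordered pairs of distinct codewords contributes at least $d$, so $S\geq M(M-1)d$. For the upper bound, I would swap the summations and, for each coordinate $i\in[n]$, let $n_{i,\alpha}$ denote the number of codewords of $C$ whose $i$-th symbol equals $\alpha\in\Fq$. The contribution of coordinate $i$ to $S$ is
\[
  \sum_{\alpha\neq\beta}n_{i,\alpha}n_{i,\beta}=M^2-\sum_{\alpha\in\Fq}n_{i,\alpha}^2.
\]
Since $\sum_{\alpha\in\Fq}n_{i,\alpha}=M$, the Cauchy-Schwarz inequality (equivalently, convexity of $x\mapsto x^2$) yields $\sum_{\alpha}n_{i,\alpha}^2\geq M^2/q$, so the coordinate contributes at most $M^2(q-1)/q$. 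Summing over the $n$ coordinates gives $S\leq nM^2(q-1)/q$.

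Combining the two bounds produces $M(M-1)d\leq nM^2(q-1)/q$, which rearranges to $M\bigl(qd-(q-1)n\bigr)\leq qd$. The hypothesis $qd>(q-1)n$ ensures that the parenthesized factor is strictly positive, so dividing by it preserves the inequality and yields the desired $M\leq qd/\bigl(qd-(q-1)n\bigr)$.

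The argument is entirely elementary; the only place that requires care is the role of the hypothesis $qd>(q-1)n$, which is precisely what is needed to justify the final division and to make the bound meaningful (otherwise the inequality would be vacuous or reversed). The Cauchy-Schwarz step is sharp exactly when the codewords distribute uniformly over $\Fq$ in each coordinate, which is the expected scenario for near-equidistant codes.
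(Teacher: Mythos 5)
Your proof is correct and is the standard double-counting argument for the Plotkin bound. The paper itself states this lemma as a citation from a textbook without providing a proof, and the argument you give (lower-bounding the sum of pairwise distances by $M(M-1)d$, upper-bounding it coordinatewise via $\sum_\alpha n_{i,\alpha}^2 \geq M^2/q$, and rearranging) is precisely the classical proof found in such references.
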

\subsection{Finite Geometry}
\label{sec:pre_finite_geo}
The projective space
$PG(m-1, q)$ over $\mathbb{F}_q$ is the geometry whose points, lines, planes, $\cdots$ , hyperplanes are the
subspaces of $\mathbb{F}^{m}_q$ of dimension $1, 2, 3, \cdots , m-1$.  So, we also use a nonzero vector $\bm g \in \mathbb{F}^{m}_q$ to denote the point in $PG(m-1, q)$. Two nonzero vectors $\bm g_1$ and $\bm g_2$ are the same point in $PG(m-1, q)$ if and only if $\bm g_1=\lambda \bm g_2$ for some $\lambda \in \mathbb{F}^*_q$. The points $\bm{g}_1, \bm g_2, \cdots,\bm{g}_{\delta+1}$ are called collinear if they lie on a common \Review{line}, i.e., any three of them are linearly dependent.  Note that when we replace \Review{$\bm d_i$} by \Review{$\lambda \bm d_i$} for any $\lambda \in \mathbb{F}^*_q$, the parameters of the code given by Eq. \eqref{gen_linear_constr} do not change. So we rewrite the code construction given in Eq. \eqref{gen_linear_constr} via the language of projective geometry as follows. Suppose $D\subset PG(m-1,q)$ is a proper subset of $PG(m-1,q)$ and $D^c=PG(m-1,q)\backslash D=\{\bm{d}_1,\bm{d}_2,\cdots,\bm{d}_{n}\}$, then a $q$-ary linear code of length $n$ is constructed by
\begin{equation}
	\mathcal{C}_{D^c}=\{\bm{c_x}=(\bm{x\cdot d}_1,\cdots,\bm{x\cdot d}_n):\bm{x}\in\Fq^m\}.\label{eq:CodeDef}
\end{equation} 
In this paper, we will use Eq. \eqref{eq:CodeDef} to construct optimal LRCs. 
Note that when $D=\varnothing$, $\mathcal{C}_{D^c}$ is the famous simplex code. Thus in this sense, for general proper subset $D$ of $PG(m-1,q)$, the code $\mathcal{C}_{D^c}$ is the punctured code of the simplex code on set $D$. 

Using character sums over finite fields, we can compute the parameters of those constructed codes.
Assume that $\omega_p$ is
the primitive $p^{\text{th}}$ root of unity in the complex number field $\mathbb{C}$, then for $a\in\Fq$, the additive character $\chi_a$ from $\Fq$ to $\Fp$ is defined as
$\chi_a(c)=\omega_p^{\Tr(ac)}, \text{for all }c\in\Fq.$
If $a=0$, then $\sum_{c\in\Fq}\chi_a(c)=q$; otherwise $\sum_{c\in\Fq}\chi_a(c)=0$ (\cite{NiedFF}). Next, we introduce how to determine the minimum distance of $\mathcal{C}_{D^c}$, which will be applied frequently in this paper. %For simplicity, we let the points in $PG(m-1,q)$ be the vectors in $\Fq^m$ that the first nonzero coordinate is $1$.
	Let $\bm{x}=(x_1,\cdots,x_m)$ be any nonzero vector of $\Fq^m$, then
	\begin{align}
		\Wt(\bm{c_x})&=|D^c|-|\{\bm{d}\in D^c:\bm{x\cdot d=0}\}|\notag \\
		&=|D^c|-\sum_{\bm{d}\in D^c}\frac{1}{q}\sum_{y\in\Fq}\omega_p^{\Tr(y\bm{x\cdot d})}\notag\\
		&=\frac{q-1}{q}|D^c|-\frac{1}{q}\sum_{\bm{d}\in D^c}\sum_{y\in\Fq^*}\omega_p^{\Tr(y\bm{x\cdot d})}\notag\\
		&=\frac{q-1}{q}|D^c|-\frac{1}{q}\sum_{\bm{d}\in (\Fq^m)^*}\omega_p^{\Tr(\bm{x\cdot d})}+\frac{1}{q}\sum_{\bm{d}\in D}\sum_{y\in\Fq^*}\omega_p^{\Tr(y\bm{x\cdot d})}.
	\end{align}
	Denote $\bm d=(d_1,\cdots,d_m)$,
 then 
 \begin{align*}
		\sum_{\bm{d}\in (\Fq^m)^*}\omega_p^{\Tr(\bm{x\cdot d})}&=\sum_{d_1\in\Fq}\cdots\sum_{d_m\in\Fq}\omega_p^{\Tr(x_1d_1)}\cdots\omega_p^{\Tr(x_md_m)}-1\\
		&=\prod_{i=1}^m\left(\sum_{d_i\in\Fq}\omega_p^{\Tr(x_id_i)}\right)-1=-1.
	\end{align*}
 As such we have 
\begin{align}
    \min_{\bm{x}\in (\Fq^{m})^*} \Wt(\bm{c_x})=\frac{q-1}{q}|D^c|+\frac{1}{q}+\frac{1}{q} \min_{\bm{x}\in (\Fq^{m})^*} \left\{\sum_{\bm{d}\in D}\sum_{y\in {\Fq}^*}\omega_p^{\Tr(y\bm{x\cdot d})}\right\}. \label{eq:calcu_minimum_weight}
\end{align}
Therefore, to determine the value $\min\limits_{\bm{x}\in (\Fq^{m})^*} \Wt(\bm{c_x})$, we only need to determine the value of $\min\limits_{\bm{x}\in (\Fq^{m})^*}\{\sum\limits_{\bm{d}\in D}\sum\limits_{y\in {\Fq}^*}\omega_p^{\Tr(y\bm{x\cdot d})}\}$. In particular, if $\min\limits_{\bm{x}\in (\Fq^{m})^*}\Wt(\bm{c_x}) \textgreater 0$, then we have $\dim(\mathcal{C}_{D^c})=m$ and $d(\mathcal{C}_{D^c})=\min\limits_{\bm{x}\in (\Fq^{m})^*}\Wt(\bm{c_x})$.
 
 We let the points in $PG(m-1,q)$ be the vectors in $\Fq^m$ that the first nonzero coordinate is 1 for simplicity. If $A$ is a nonempty subset of $[m]$, we let $P_{[m]}=PG(m-1,q)$ and $P_A$ be the subset of $PG(m-1,q)$ consisting of points whose coordinates outside of $A$ are all 0. It is easy to see that 
$$|P_A|=\frac{q^{|A|}-1}{q-1},~\bigcup\limits_{\alpha\in\Fq^*}\alpha P_A=L_A^*,$$ where
$\alpha P_A=\{\alpha\bm{a}:\bm{a}\in P_A\}$ and
\begin{equation}
	L_A=\{(a_1,\cdots,a_m)\in\mathbb{F} _q^m:a_i=0\text{ if } i\notin A\}.\label{projec_space}
\end{equation}
For any two subsets $A_1,A_2$ of
$[m]$, the intersection of $P_{A_1}$ and $P_{A_2}$ is equal to $P_{A_1\cap A_2}$, where $P_\varnothing  =\varnothing.$
\subsection{Krawtchouk Polynomials}
\label{sec:pre_Kraw_poly}
In this subsection, we briefly review some basic results of Krawtchouk polynomials.

Given positive integers $n,q$, and suppose $0 \leq k \leq n$, the Krawtchouk polynomial of degree $k$ is defined as \cite{IntroToCoding} 
\begin{align*}
K_k(x;n,q):=K_k(x):=\sum_{j=0}^k(-1)^j\binom{x}{j}\binom{n-x}{k-j}(q-1)^{k-j},
\end{align*}
where $\binom{x}{j}:=\frac{x(x-1)\cdots(x-j+1)}{j!} \text{ for } x\in \mathbb{R}$.

The following lemma is a slight modification of \cite[Lemma(5.3.1)]{IntroToCoding}. It will be applied frequently to determine the minimum distance of the codes.
\begin{lemma}[\cite{IntroToCoding}, Lemma(5.3.1)]
	Let $a,s,w$ be integers satisfying $s\geq 1$, $0\leq a\leq  s$, $1\leq w\leq s$ and $\bm{x}$ be a vector of length $s$ over $\Fq$ with wt$(\bm{x})=a$. Then
	we have $$\sum_{\bm{y}\in\Fq^s,\Wt(\bm{y})=w}\omega _p^{\Tr(\bm{x}\cdot \bm{y})}=K_w(a;s,q).$$\label{kraw}
\end{lemma}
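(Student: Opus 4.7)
The plan is to evaluate the character sum by decomposing over the support set of $\bm{y}$. Let $S=\text{supp}(\bm{x})\subseteq[m]$, so $|S|=a$, and write $\chi(c)=\omega_p^{\text{tr}(c)}$ for the canonical additive character of $\Fq$. For any $T\subseteq[m]$ with $|T|=s$, the contribution from those $\bm{y}$ whose support is exactly $T$ factors coordinatewise, since the exponent $\text{tr}(\bm{x}\cdot\bm{y})=\sum_{i\in T}\text{tr}(x_iy_i)$ is an additive function of each $y_i$:
$$\sum_{\bm{y}:\,\text{supp}(\bm{y})=T}\chi(\bm{x}\cdot\bm{y})=\prod_{i\in T}\sum_{y_i\in\Fq^*}\chi(x_iy_i).$$

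The next step is to evaluate each factor according to whether $i\in T\setminus S$ or $i\in T\cap S$. If $i\in T\setminus S$ then $x_i=0$ and the factor equals $|\Fq^*|=q-1$. If $i\in T\cap S$ then $x_i\neq 0$, so $y_i\mapsto x_iy_i$ permutes $\Fq^*$, and the orthogonality relation $\sum_{z\in\Fq}\chi(z)=0$ from the preliminaries yields $\sum_{y_i\in\Fq^*}\chi(x_iy_i)=-1$. Setting $j=|T\cap S|$, the total contribution of a fixed $T$ is therefore $(-1)^j(q-1)^{s-j}$, depending only on $j$.

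Finally I would group the subsets $T$ by their intersection size with $S$. The number of $s$-subsets of $[m]$ meeting $S$ in exactly $j$ elements is $\binom{a}{j}\binom{m-a}{s-j}$, so summing over $T$ and then over $j$ gives
$$\sum_{\bm{y}\in\Fq^m,\,\text{wt}(\bm{y})=s}\chi(\bm{x}\cdot\bm{y})=\sum_{j=0}^{s}(-1)^j\binom{a}{j}\binom{m-a}{s-j}(q-1)^{s-j}=K_s(a;m,q).$$

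No step is a genuine obstacle; the only care needed is bookkeeping. In particular one should notice that the binomial coefficients $\binom{a}{j}$ and $\binom{m-a}{s-j}$ vanish outside $\max(0,s-m+a)\le j\le\min(s,a)$, so extending the summation range to $0\le j\le s$, as written in the definition of $K_s$, is harmless and matches the combinatorial count exactly.
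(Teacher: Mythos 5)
The paper states this lemma without proof, citing it directly as a slight modification of a result from the referenced textbook on coding theory. Your argument is correct and is the standard proof of this identity: decompose by the support $T$ of $\bm{y}$, use the multiplicativity of the additive character to factor the sum coordinatewise, evaluate each factor as $q-1$ or $-1$ according to whether the coordinate lies outside or inside $\text{supp}(\bm{x})$, and then count subsets by intersection size to recover the binomial coefficients in $K_s(a;m,q)$. There is nothing to compare against in the paper itself, but your derivation is complete and matches the definition of the Krawtchouk polynomial given in the preliminaries.
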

\section{$(2,\delta)$-LRCs from Punctured Simplex Codes}
\label{sec:main_results}
In this section, we will provide several constructions of LRCs via punctured simplex codes. Firstly, we give a simple lemma which will be used to determine the locality of linear codes. 
\begin{lemma}\label{lem4}
	Let $\delta\geq 2$ be an integer, $\bm{g}_1, \bm g_2, \cdots,\bm{g}_{\delta+1}$ be $\delta+1$ distinct collinear points in $PG(m-1,q)$. Let $\mathcal{C}$ be the linear code with the generator matrix $G=[\bm{g}_1\ \bm g_2\ \cdots\ \bm{g}_{\delta+1}]$, then $\mathcal{C}$ is a $q$-ary $[\delta+1, 2, \delta]$-MDS code.\label{delta1line}
\end{lemma}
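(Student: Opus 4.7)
The plan is to argue in the projective-geometric language already set up: the $\delta+1$ distinct collinear points $\bm{g}_1,\dots,\bm{g}_{\delta+1}$ by definition lie on a common line $\ell$ of $PG(m-1,q)$, i.e.\ a $2$-dimensional linear subspace $L\subseteq \mathbb{F}_q^m$. Hence the column space of $G$ equals $L$, so $\operatorname{rank}(G)=2$ and the code has length $\delta+1$ and dimension $2$.

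For the minimum distance I would proceed via the parity-check/orthogonality description. A codeword is $\bm{c}_{\bm{x}}=(\bm{x}\cdot\bm{g}_1,\dots,\bm{x}\cdot\bm{g}_{\delta+1})$, and $\bm{c}_{\bm{x}}=\bm{0}$ iff $\bm{x}$ annihilates every $\bm{g}_i$, iff $\bm{x}\in L^{\perp}$ (since the $\bm{g}_i$ span $L$). For a nonzero codeword, $\bm{x}\notin L^{\perp}$, so the hyperplane $\bm{x}^{\perp}$ cannot contain $L$; consequently $L\cap\bm{x}^{\perp}$ is a proper subspace of $L$, i.e.\ at most $1$-dimensional, which corresponds to at most one projective point on $\ell$. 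Since the $\bm{g}_i$ represent pairwise distinct projective points, at most one index $i$ can satisfy $\bm{x}\cdot\bm{g}_i=0$. Therefore $\operatorname{wt}(\bm{c}_{\bm{x}})\geq (\delta+1)-1=\delta$, giving $d(C)\geq\delta$.

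Finally I would invoke the Singleton bound, which forces $d(C)\leq n-k+1=(\delta+1)-2+1=\delta$. Combining with the lower bound yields $d(C)=\delta$, so $C$ is an MDS $[\delta+1,2,\delta]_q$ code as claimed.

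There is no real obstacle here; the statement is essentially the observation that a line in $PG(m-1,q)$ is cut by any hyperplane in at most one point (unless the hyperplane contains the line). The only subtlety worth flagging is the implicit requirement $\delta+1\leq q+1$: a projective line in $PG(m-1,q)$ has exactly $q+1$ points, so having $\delta+1$ distinct collinear points forces $\delta\leq q$, which is consistent with the MDS condition for a code of dimension $2$ over $\mathbb{F}_q$.
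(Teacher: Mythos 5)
Your proof is correct, but it takes a genuinely different route from the paper's. You bound the minimum distance of $C$ directly: a nonzero codeword $\bm{c}_{\bm{x}}$ corresponds to $\bm{x}\notin L^{\perp}$ (where $L$ is the $2$-dimensional subspace spanned by the $\bm{g}_i$), and the hyperplane $\bm{x}^{\perp}$ can meet the line $L$ in at most one projective point, so at most one coordinate of $\bm{c}_{\bm{x}}$ vanishes, giving $d\geq\delta$; Singleton then forces equality. The paper instead argues through the dual code: since $G$ has distinct projective points as columns, any two columns are linearly independent, so viewing $G$ as the parity-check matrix of $C^{\perp}$ gives $d(C^{\perp})\geq 3$; the Singleton bound then shows $C^{\perp}$ is a $[\delta+1,\delta-1,3]$ MDS code, whence $C$ is MDS. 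Your argument is more self-contained and geometric (it never invokes the fact that the dual of an MDS code is MDS), which fits the finite-geometry framing of the paper; the paper's is marginally shorter for readers who already have the MDS-duality fact in hand. You also correctly flag the implicit constraint $\delta\leq q$ (a projective line has only $q+1$ points), which the paper leaves tacit here and only imposes later as the hypothesis $2\leq\delta\leq q$ in Theorem~\ref{suffi_condi_lrc}.
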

\begin{proof}
	Since any two of $\bm{g}_1,\bm g_2, \cdots,\bm{g}_{\delta+1}$ are linearly independent and any three of $\bm{g}_1, \bm g_2, \cdots,\bm{g}_{\delta+1}$ are linearly dependent, we have $rank(G)=2$. Thus $\dim(\mathcal{C})=2$ and $\dim(\mathcal{C}^{\perp})=\delta-1$. On the other hand, $G$ is the parity-check matrix of $\mathcal{C}^{\perp}$, then $d(\mathcal{C}^{\perp}) \geq 3$ since any two of $\bm{g}_1, \bm g_2, \cdots,\bm{g}_{\delta+1}$ are linearly independent. By the Singleton bound, $d(C^{\perp}) \leq \delta+1-(\delta-1)+1=3$, hence $\mathcal{C}^{\perp}$ is a $[\delta+1,\delta-1,3]$-MDS code. So $\mathcal{C}$ is a $[\delta+1,2,\delta]$-MDS code.
\end{proof}
% \begin{theorem}
	%   Let $S_m$ be the $q$-ary simplex code with parameters $[\frac{q^m-1}{q-1},m,q^{m-1}]$, then $S_m$ is a $(2,q)$-LRC.\label{thm5}
	% \end{theorem}
% \begin{proof}
	%   Assume that $i\in[\frac{q^m-1}{q-1}]$ and $R_i\subseteq [\frac{q^m-1}{q-1}]$. If $i\in R_i,|R_i|\leq 2+\delta-1=\delta+1,d({S_m}|_{R_i})\geq \delta$,
	%   then the dimension of the punctured code ${S_m}|_{R_i}$ is at most 2 according to Singleton bound. Since any two columns of the generator matrix of $S_m$ linearly independent,
	%   the dimension of the punctured code ${S_m}|_{R_i}$ is 2, therefore ${S_m}|_{R_i}$ is maximum distance separate(MDS) code, the length of ${S_m}|_{R_i}$ is at most
	%   $q+1$ according to MDS conjecture\cite{}. For any coordinate $i\in[\frac{q^m-1}{q-1}]$ let $\bm{g}_i$ be the column vector of the generator matrix of $S_m$ indexed by $i$,
	%   consider a line $\ell$ in $PG(m-1,q)$ containing $\bm{g}_i$, $\ell$ contains exactly $q+1$ points, let $R_i$ be the set of all subscript of these points, then $|R_i|=q+1,i\in R_i$, we will prove that $d({S_m|}_{R_i})=q$.
	%   If there exists a codeword $\bm{c}=(c_1,...,c_{q-1},0,0)\in {S_m|}_{R_i}$ such that wt$(\bm{c})=q-1$, then $c_i=0$ for all $1\leq i\leq q-1$ since any two points in $\ell$
	%   can represent the other points, thus $d({S_m|}_{R_i})\geq q$. The dimension of ${S_m|}_{R_i}$ is equal to 2, then $d({S_m|}_{R_i})\leq q+1-2+1=q$ according to Singleton bound. To sum up, $S_m$ is a $(2,q)$-LRC.
	% \end{proof}

In the following, we present a sufficient condition that guarantees a punctured simplex code to be a $(2,\delta)$-LRC.
\begin{theorem}\label{suffi_condi_lrc}
	Suppose $2\leq \delta \leq q$ and  $D$ is a subset of $PG(m-1, q)$. If $|D|\leq\frac{q^{m-1}-1}{q-1}(q+1-\delta)-1$, then the code $\mathcal{C}_{D^c}$ given in Eq. \eqref{eq:CodeDef} is a $q$-ary $(2,\delta)$-LRC, where $D^c=PG(m-1,q)\setminus D $.
\end{theorem}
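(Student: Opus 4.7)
\medskip

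The plan is to translate the $(2,\delta)$-locality requirement into a geometric statement: for each column of the generator matrix of $C_{D^c}$, i.e., for each point $\bm{g}\in D^c\subseteq PG(m-1,q)$, we need to exhibit a repair set $R_{\bm g}\subseteq D^c$ of size at most $2+\delta-1=\delta+1$ with $\bm{g}\in R_{\bm g}$ such that $d(C_{D^c}|_{R_{\bm g}})\geq \delta$. Lemma~\ref{delta1line} already hands us exactly such a configuration: any $\delta+1$ distinct collinear points (i.e., points lying on a common projective line in $PG(m-1,q)$) in $D^c$ give rise to a $[\delta+1,2,\delta]$-MDS restriction of $C_{D^c}$. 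Thus the entire task reduces to showing that every $\bm{g}\in D^c$ lies on some projective line containing at least $\delta$ other points of $D^c$.

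The main tool is a double counting / pigeonhole argument on lines through $\bm{g}$. Recall that the lines of $PG(m-1,q)$ passing through a fixed point $\bm g$ partition $PG(m-1,q)\setminus\{\bm{g}\}$ into $\frac{q^{m-1}-1}{q-1}$ parts, each of size $q$ (the $q$ remaining points of a line through $\bm g$). Suppose, for a contradiction, that no line through $\bm g$ meets $D^c\setminus\{\bm g\}$ in $\delta$ or more points. Since $\bm g\in D^c$ is disjoint from $D$, every line through $\bm g$ would then contain at least $q-(\delta-1)=q+1-\delta$ points of $D$ (among its $q$ non-$\bm g$ points). Summing over all $\frac{q^{m-1}-1}{q-1}$ lines through $\bm g$ and noting that each point of $D$ lies on exactly one such line, we would obtain
\[
  |D|\;\geq\;\frac{q^{m-1}-1}{q-1}(q+1-\delta),
\]
contradicting the hypothesis $|D|\leq \frac{q^{m-1}-1}{q-1}(q+1-\delta)-1$. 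Therefore some line $\ell$ through $\bm g$ contains at least $\delta$ points of $D^c\setminus\{\bm g\}$; picking any such $\delta$ points together with $\bm g$ yields $\delta+1$ distinct collinear points in $D^c$, and hence, by Lemma~\ref{delta1line}, a $(2,\delta)$-local repair set for $\bm g$.

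The argument has essentially no obstacle once the geometric reformulation is in place: the nontrivial ingredient is recognizing that the $(2,\delta)$-locality for $\bm{g}$ is equivalent to the existence of a line through $\bm{g}$ that avoids too much of $D$, and that the bound on $|D|$ is precisely tight for forcing this via pigeonhole on the $\frac{q^{m-1}-1}{q-1}$ lines through any given point. The condition $\delta\leq q$ is implicitly used so that a line (which has $q+1$ points) can actually accommodate $\delta+1$ points, and so that the per-line count $q+1-\delta\geq 1$ in the contradiction step is meaningful.
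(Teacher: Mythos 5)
Your proof is correct and follows essentially the same route as the paper's: a pigeonhole count on the $\frac{q^{m-1}-1}{q-1}$ lines through a fixed point $\bm g\in D^c$, followed by Lemma~\ref{delta1line} to turn $\delta+1$ collinear points of $D^c$ into a $[\delta+1,2,\delta]$-MDS repair set. Your write-up is if anything more explicit, spelling out the double-count on lines through $\bm g$ that the paper compresses into a one-line appeal to the Pigeonhole Principle.
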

\begin{proof}
	For any point $\bm{g}\in D^c$, there are $\frac{q^{m-1}-1}{q-1}$ lines in $PG(m-1,q)$ containing $\bm{g}$, and each line has $q+1$ points. Since $|D|\leq\frac{q^{m-1}-1}{q-1}(q+1-\delta)-1$, by the Pigeonhole Principle, there exists at least one line $L$ containing $\bm g$, such that there are $\delta+1$ points $\bm g_1=\bm g, \bm g_2, \cdots, \bm g_{\delta+1}$ of $L$ belonging to the subset $D^c$. By Lemma \ref{lem4}, $d((\mathcal{C}_{D^c})_{|E})=\delta$, where $E=\{\bm g_1,\bm g_2, \cdots, \bm g_{\delta+1}\}$. Hence the code $\mathcal{C}_{D^c}$ has $(2, \delta)$-locality.
\end{proof}
\begin{remark}
	When $D=\varnothing$, then $\mathcal{C}_{D^c}$ is just  the $q$-ary simplex code. From Theorem \ref{suffi_condi_lrc}, we know that the $q$-ary simplex codes have locality $(2,q)$. 
	In particular, to ensure the code $\mathcal{C}_{D^c}$ to be a $2$-LRC, it only needs to satisfy that $|D| \leq q^{m-1}-2$. By using this sufficient condition, 
 the $2$-locality of the $2$-LRCs proposed in \cite{Luopary}
 can be determined simpler.
\end{remark}
Hyun \textit{et al.} \cite{optBinCodesimplicial} constructed infinite families of binary Griesmer codes punctured by unions of projective spaces. Luo and Ling \cite{Luopary} obtained similar results of linear codes over $\mathbb{F}_p$. In the following, we extend their results to general $q$-ary codes.
\begin{theorem}\label{thm:generaliz_luo}
	Let $m,t>1$ be positive integers. Assume that $A_1,...,A_t$ are nonempty subsets of $[m]$ satisfying $|A_1|\leq |A_2|\leq \dots\leq |A_t|$ and $A_i\cap A_j=\varnothing $ for any $i\neq j\in[t].$
	Let $D=\bigcup\limits_{i=1}^tP_{A_i}$ and $D^c=P_{[m]}\backslash D$, then the code $\mathcal{C}_{D^c}$ defined by Eq. \eqref{eq:CodeDef} is a $q$-ary linear code with parameters $[\frac{q^m-1}{q-1}-\frac{\sum_{i=1}^tq^{|A_i|}-t}{q-1},m,q^{m-1}-\sum_{i=1}^tq^{|A_i|-1}]$. Furthermore, assume that $|A_1|=...=|A_{i_1}|=s_1,|A_{i_1+1}|=...=|A_{i_2}|=s_2,
	...,|A_{i_{u-1}+1}|=...=|A_{i_u}|=s_u$, where $s_1\textless s_2\textless ...\textless s_u$ and $i_u=t$. If $\max\{i_1,i_2-i_1,...,i_u-i_{u-1}\}\leq q-1$, then $\mathcal{C}_{D^c}$ is a Griesmer code. 
\end{theorem}
\begin{proof}
	Note that $P_{A_i}\cap P_{A_j}=P_{A_i\cap A_j}=\varnothing$ for any $i\neq j\in[t]$, so we have $|D|=\sum_{i=1}^t|P_{A_i}|=\frac{\sum_{i=1}^tq^{|A_i|}-t}{q-1}$, thus the length of $\mathcal{C}_{D^c}$ is $n=\frac{q^m-1}{q-1}-\frac{\sum_{i=1}^tq^{|A_i|}-t}{q-1}$.
	According to Eq. \eqref{eq:calcu_minimum_weight}, to determine the value of $\min\limits_{\bm{x}\in(\Fq^{m})^*}\Wt(\bm{c_x})$, we only need to determine the value of $\min\limits_{\bm{x}\in (\Fq^{m})^*}\{\sum\limits_{\bm{d}\in D}\sum\limits_{y\in\Fq^*}\omega_p^{\Tr(y\bm{x\cdot d})}\}$. For any $\bm{x}\in ({\Fq^m})^*$, we have
\begin{align*}
		\sum_{\bm{d}\in D}\sum_{y\in\Fq^*}\omega_p^{\Tr(y\bm{x\cdot d})}=\sum_{i=1}^t\sum_{\bm{d}\in P_{A_i}}\sum_{y\in\Fq^*}\omega_p^{\Tr(\bm{x\cdot}(y\bm{d}))}=\sum_{i=1}^t\sum_{\bm{d}\in (\Fq^{|A_i|})^*}\omega_p^{\Tr(\bm{x}_{A_i}\cdot \bm{d})},
	\end{align*}
 where $\bm x_{A_i}$ is the projection vector of $\bm x$ on $A_i$ (see Eq. \eqref{def:projection}).
	Note that
 \begin{align*}
		\sum_{\bm{d}\in (\Fq^{|A_i|})^*}\omega_p^{\Tr(\bm{x}_{A_i}\cdot\bm{d})}=\begin{cases}
			q^{|A_i|}-1, &\textnormal{ if }\bm{x}_{A_i}=\bm{0},\\
			-1, &\textnormal{ if } \bm{x}_{A_i}\neq\bm{0}.
		\end{cases}
	\end{align*}
We have 
$\min\limits_{\bm{x}\in (\Fq^{m})^*}\{\sum\limits_{\bm{d}\in D}\sum\limits_{y\in\Fq^*}\omega_p^{\Tr(y\bm{x\cdot d})}\}=-t.$
 By Eq. \eqref{eq:calcu_minimum_weight}, the minimum weight of $\mathcal{C}_{D^c}$ is 
 \begin{align*}
 d:=\min\limits_{\bm{x}\in(\Fq^{m})^*}\Wt(\bm{c_x})=\frac{q-1}{q}|D^c|+\frac{1}{q}-\frac{t}{q}
	=q^{m-1}-\sum\limits_{i=1}^tq^{|A_i|-1}.
 \end{align*}
	It is easy to prove that $q^{m}-\sum_{i=1}^tq^{|A_i|}>0$ since $\sum_{i=1}^t |A_i|\leq m$. Thus wt$(\bm {c_x})=0$ if and only if $\bm x=\bm 0$, hence the dimension and minimum distance of $\mathcal{C}_{D^c}$ are $m$ and $d$, respectively.
	
	  Suppose $\sum_{i=1}^tq^{|A_i|-1}=\sum_{i=g}^hb_iq^i$, where $0\leq b_i\leq q-1 \text{ for } i=g,...,h$. Then
	\begin{align*}
	&\sum_{i=0}^{m-1}\left\lceil\frac{q^{m-1}-\sum_{j=1}^tq^{|A_j|-1}}{q^i}\right\rceil=\sum_{i=0}^{m-1}\left\lceil\frac{q^{m-1}-\sum_{j=g}^hb_jq^j}{q^i}\right\rceil\\
	&\Review{=\sum_{i=0}^{m-1}q^{m-1-i}-\sum_{j=g}^h b_j\sum_{i=0}^{j}q^{j-i}}\\
	&\Review{=\sum_{i=0}^{m-1}q^{m-1-i}-\sum_{j=g}^h b_j\frac{q^{j+1}-1}{q-1}}\\
	&=\frac{q^m-1}{q-1}-\frac{\sum_{i=1}^tq^{|A_i|}-\sum_{i=g}^hb_i}{q-1}.
\end{align*}
	As $\max\{i_1,i_2-i_1,...,i_u-i_{u-1}\}\leq q-1,\sum_{i=g}^hb_i=i_1+i_2-i_1+...+i_u-i_{u-1}=i_u=t.$ The length of $\mathcal{C}_{D^c}$ is $\frac{q^m-1}{q-1}-|D|=\frac{q^m-1}{q-1}-\frac{\sum_{i=1}^tq^{|A_i|}-\sum_{i=g}^hb_i}{q-1},$ hence the code $\mathcal{C}_{D^c}$ is a Griesmer code.
	%The proof of the second conclusion is similar to that of\cite[Theorem 3.2]{Luopary}, we omit it here.
\end{proof}
We now investigate the locality of the codes given in Theorem \ref{thm:generaliz_luo}.
\begin{theorem}\label{thm:generaliz_luo_loc}
	Keep the notation as in Theorem \ref{thm:generaliz_luo}. If $t=2$ and $|A_i|\leq m-2$ for all $i\in[t]$, then the code $\mathcal{C}_{D^c}$ has locality $(2,q)$; if $t\geq 3$ and $m\geq 4$, then the code $\mathcal{C}_{D^c}$ has locality $(2,q)$; if $m>t=2,q>2$ and $|A_2|=m-1$, then the code $\mathcal{C}_{D^c}$ has locality $(2,q-1)$.
\end{theorem}
\begin{proof}
	\textbf{Case 1:} $t=2, |A_i|\leq m-2,i=1,2.$
	If $m\geq 4$, then $|D|=\frac{q^{|A_1|}+q^{|A_2|}-2}{q-1}\leq\frac{q^2+q^{m-2}-2}{q-1}\leq\frac{q^{m-1}-q}{q-1}$; if $m=3$, then
	$|D|=\frac{q^{|A_1|}+q^{|A_2|}-2}{q-1}=2\leq\frac{q^2-q}{q-1}$. By Theorem \ref{suffi_condi_lrc}, we can deduce that the code $\mathcal{C}_{D^c}$ has locality $(2,q)$.
	
	%\textbf{Case 2.}$(m=t=2,q>3)$ Obviously, the code $\mathcal{C}_{D^c}=[q-1,2,q-2]$ and has locality $(2,q-2)$.
	
	\textbf{Case 2:} $t\geq 3$ and $m\geq 4$. Note that
	$|D|=\frac{\sum_{i=1}^tq^{|A_i|}-t}{q-1}\leq\frac{q^{m-t+1}+(t-1)q-t}{q-1}=\frac{q^{m-t+1}+t(q-1)-q}{q-1}\leq\frac{q^{m-t+1}+m(q-1)-q}{q-1}\leq \frac{q^{m-2}+q^{m-2}(q-1)-q}{q-1}=\frac{q^{m-1}-q}{q-1}$. By Theorem \ref{suffi_condi_lrc}, we can deduce that the code $\mathcal{C}_{D^c}$ has locality $(2,q)$.
	
	%\textbf{Case 4.}$(m=t=3)$ In this case, $|A_i|=1,i=1,2,3$ and $q\geq 4$. We can easy to know that $|D|=\frac{3q-3}{q-1}=3\leq\frac{q^2-q}{q-1}$. Therefore the code $\mathcal{C}_{D^c}$ has locality $(2,q).$
	
	\textbf{Case 3:} $m>t=2, |A_1|=1,|A_2|=m-1,q>2$. Note that $|D|=\frac{q^{m-1}+q-2}{q-1}\leq\frac{2q^{m-1}-2}{q-1}-1$. By Theorem \ref{suffi_condi_lrc}, we can deduce that the code $\mathcal{C}_{D^c}$ has locality $(2,q-1)$.
\end{proof}
 
%\begin{remark}
% \cite[Theorem 4.2]{Luopary} states that when $q=3$, if $A_i\neq\{2,...,m\}$ for all $i\in[t]$, then the code $\mathcal{C}_{D^c}$ is a 2-LRC. When $q=3$ and $\exists i\in[t]$ such that $A_i=\{2,...,m\}$, by our method, the number of those punctured coordinates $|D|\leq\frac{\sum_{i=1}^tq^{|A_i|}-t}{q-1}\leq q^{m-1}-2$ for $q=3$. Therefore our method relaxes the condition of \cite[Theorem 4.2]{Luopary} and is simpler than the proof of \cite[Theorem 4.2]{Luopary}.
%\end{remark}
\begin{theorem}\label{thm:gen_luo_kopt}
	Keep the notation  as in Theorems \ref{thm:generaliz_luo} and \ref{thm:generaliz_luo_loc}, the code $\mathcal{C}_{D^c}$ is \Review{a} $k$-optimal LRC with respect to the bound \eqref{CMbound_rdeltaLRC}. 
\end{theorem}
\begin{proof} We use $[n,k=m,d]$ to denote the parameters of the code $\mathcal{C}_{D^c}$.

	\textbf{Case 1:} $t=2, |A_i|\leq m-2,i=1,2.$ By Theorem \ref{thm:generaliz_luo_loc}, the code $\mathcal{C}_{D^c}$ is a $(2,q)$-LRC. Since it is a Griesmer code, we have $n=\sum_{i=0}^{m-1}\left\lceil \frac{d}{q^i}\right\rceil$. Note that $\left\lceil \frac{d}{q^{m-1}}\right\rceil=1$. Thus, $\sum_{i=0}^{m-2}\left\lceil \frac{d}{q^i}\right\rceil=n-1\textgreater n-q-1$. Then we have $k_{opt}^{(q)}(n-q-1,d)\leq m-2$ by the Griesmer bound. By Eq. \eqref{CMbound_rdeltaLRC}, $k\leq 2+k_{opt}^{(q)}(n-1-q,d)\leq m$ and the code $\mathcal{C}_{D^c}$ is $k$-optimal.

	\textbf{Case 2:} $t\geq 3$ and $m\geq 4$. Its proof is similar to that of \textbf{Case 1}.
	
	\textbf{Case 3:} $m>t=2, |A_1|=1,|A_2|=m-1,q>2$. By Theorem \ref{thm:generaliz_luo_loc}, the code $\mathcal{C}_{D^c}$ is a $(2,q-1)$-LRC. Since it is a Griesmer code, we have $n=\sum_{i=0}^{m-1}\left\lceil \frac{d}{q^i}\right\rceil$. Note that $\left\lceil \frac{d}{q^{m-1}}\right\rceil=1$. Thus, $\sum_{i=0}^{m-2}\left\lceil \frac{d}{q^i}\right\rceil=n-1\textgreater n-q$. Then we have $k_{opt}^{(q)}(n-q,d)\leq m-2$ by the Griesmer bound. By Eq. \eqref{CMbound_rdeltaLRC}, $k\leq 2+k_{opt}^{(q)}(n-q,d)\leq m$ and the code $\mathcal{C}_{D^c}$ is $k$-optimal.
\end{proof}
%\begin{remark}\label{rem:extracaseforkopt}
%    Note that in the proof of Theorem \ref{thm:gen_luo_kopt}, we have not use the condition $\max\{\i_1,i_2-i_1,\dots,i_u-i_{u-1}\}$ (see \ref{thm:generaliz_luo}) in the following cases: 
 %   \begin{itemize}
 %       \item[(1))] $t=2$, $|A_i|\leq m-2$, $i=1,2$.
 %       \item[(2)] $m\geq 3$ and $t\geq 2$.
 %   \end{itemize}
%\end{remark}
\begin{remark}
 In \cite[Theorem 3.1, 3.2, 4.3]{Luopary}, Luo and Ling proved that the codes $\mathcal{C}_{D^c}$ in Theorem \ref{thm:generaliz_luo} are both Griesmer codes and $k$-optimal $2$-LRCs while $q$ is an odd prime. In Theorem \ref{thm:generaliz_luo}, \ref{thm:generaliz_luo_loc} and \ref{thm:gen_luo_kopt}, we generalize their results from the following perspectives:
 \begin{itemize}
     \item Our constructions allow for all finite field $\mathbb{F}_q$.
     \item \cite[Theorem 4.3]{Luopary} shows that the codes in Theorem \ref{thm:generaliz_luo} (for $q$ is an odd prime) are $k$-optimal $2$-LRCs.
     However, in Theorem \ref{thm:generaliz_luo_loc} and \ref{thm:gen_luo_kopt}, we show that the codes in Theorem \ref{thm:generaliz_luo} are actually $k$-optimal $(2,q)$-LRCs or $(2,q-1)$-LRCs (except the case ``$q=2$, $t=2$ and $|A_2|=m-1$''). %Moreover, some other families of $k$-optimal $(2,q)$ or $(2,q-1)$-LRCs are also proposed in Theorem \ref{thm:generaliz_luo_loc} and \ref{thm:gen_luo_kopt}.
 \end{itemize}
\end{remark}
\begin{example}
	Let $q=4,m=3$ and $A_1=\{1\},A_2=\{2,3\}$, then $\mathcal{C}_{D^c}$ defined in Theorem \ref{thm:generaliz_luo} is a 4-ary $[15,3,11]$ Griesmer code with a generator matrix 
 \begin{align*}
 G=\begin{pmatrix}
		1       &1       &1       &1       &1       &1       &1       &1       &1       &1       &1       &1       &1       &1       &1       \\
		1       &\alpha  &\alpha+1&0       &1       &\alpha  &\alpha+1  &0       &1       &\alpha  &\alpha+1&0       &1       &\alpha  &\alpha+1\\
		0       &0       &0       &1       &1       &1       &1     &  \alpha  &\alpha  &\alpha  &\alpha  &\alpha+1&\alpha+1&\alpha+1&\alpha+1
	\end{pmatrix},\end{align*} where $\alpha$ is a primitive element in $\mathbb{F}_4$. $\mathcal{C}_{D^c}$ is a $(2, 3)$-LRC. For instance, one can see that the columns $(1,1,0)^T,(1,0,1)^T,(1,\alpha,\alpha+1)^T,(1,\alpha+1,\alpha)^T$
	%$\begin{pmatrix}
		%    1\\1\\0
		%\end{pmatrix},\begin{pmatrix}
		%    1\\0\\1
		%\end{pmatrix},\begin{pmatrix}
		%    1\\\alpha\\\alpha+1
		%\end{pmatrix}$ and $\begin{pmatrix}
		%    1\\\alpha+1\\\alpha
		%\end{pmatrix}$
		of $G$ generate a $[4,2,3]$-code. Hence the first symbol of $\mathcal{C}_{D^c}$ has locality $(2,3)$. 
		Note that $k^{(4)}_{opt}(11,11)=1$, thus $\mathcal{C}_{D^c}$ attains the generalized C-M bound  \eqref{CMbound_rdeltaLRC}.
	\end{example}
	In the following, we consider another family of punctured simplex codes, which is motivated by \cite[Theorem 10]{Anticode}.
	\begin{theorem}\label{thm:weight2}
		Let $m\geq 3$, $A\subseteq [m]$ with $|A|=s \geq 2$, $D=\{\bm{d}\in P_A: \Wt(\bm{d})=2\}$ and $D^c=P_{[m]}\backslash D$. Then the code $\mathcal{C}_{D^c}$ defined in Eq. \eqref{eq:CodeDef} is a $q$-ary $k$-optimal $(2,q)$-LRC with parameters $$[n,k,d]=\left[\frac{q^m-1}{q-1}-(q-1)\binom{s}{2},m,\frac{(q-1)n+1}{q}+\frac{\Delta}{q}\right]$$ provided that $(q-1)\binom{s}{2}\leq\frac{q^{m-1}-q}{q-1}$ and $0\textless \frac{qd}{\Delta+q^2}\textless q^{m-1}$, where $\Delta=K_2(\nearest{s-\frac{1}{2}+\frac{1-s}{q}};s,q)$ (the function $\nearest{\cdot}$ is defined by Eq. \eqref{def:nearest}).
	\end{theorem}
	\begin{proof}
		There are $(q-1)^2\binom{s}{2}$ vectors in $L_A$ with Hamming weight $2$, so $|D|=(q-1)\binom{s}{2}$ since $\Wt(\bm{a})=\Wt(\lambda\bm{a})$ if and only if $\lambda \in\Fq^*$ for $\alpha \neq \bm{0}$. As such, the length of $\mathcal{C}_{D^c}$ is $n=|D^c|=\frac{q^m-1}{q-1}-(q-1)\binom{s}{2}$. 
		
		Our next goal is to determine the value of $\min\limits_{\bm{x}\in (\Fq^m)^*}\{\sum\limits_{\bm{d}\in D}\sum\limits_{y\in\Fq^*}\omega_p^{\Tr(y\bm{x\cdot d})}\}$.
		For any $\bm{x}\in (\Fq^{m})^*$, we have
		\begin{align}\notag
			\sum_{\bm{d}\in D}\sum_{y\in\Fq^*}\omega_p^{\Tr(y\bm{x\cdot d})}&=\sum_{\bm{d}\in \Fq^s,\Wt(\bm{d})=2}\omega_p^{\Tr(\bm{x}_A\cdot\bm{d})}\\
         \notag
			&=K_2(a;s,q)\\
			&=\frac{q^2}{2}a^2-\left(\frac{2(q-1)qs+q(2-q)}{2}\right)a+\binom{s}{2}(q-1)^2, \label{eq:K_2explicit}
		\end{align}
	where $a=\Wt(\bm{x}_A)$.
		Its axis of symmetry is $a=\frac{2qs-2s+2-q}{2q}=s-\frac{1}{2}+\frac{1-s}{q}\in [1,s]$. Note that $a=\Wt(\bm{x}_A) \in \{0\}\cup[s]$, thus the above function achieves its minimum value at $a=\nearest{s-\frac{1}{2}+\frac{1-s}{q}}$. Therefore, we have 
  \begin{align*}
      \Delta:=&\min\limits_{\bm{x}\in (\Fq^{m})^*}\left\{\sum\limits_{\bm{d}\in D}\sum\limits_{y\in\Fq^*}\omega_p^{\Tr(y\bm{x\cdot d})}\right\}=K_2(\nearest{s-\frac{1}{2}+\frac{1-s}{q}};s,q).\\
      %&\geq K_2(s-\frac{1}{2}+\frac{1-s}{q};s,q)\\
      %&=-\frac{4s(q-1)+(q-2)^2}{8}.
  \end{align*}
		
  By Eq. \eqref{eq:calcu_minimum_weight}, we have
		\begin{align*}
			d:=\min_{\bm{x}\in(\Fq^{m})^*}\Wt(\bm{c_x})=\frac{(q-1)n+1}{q}+\frac{\Delta}{q}.\\
		\end{align*}
Now we prove that $d\textgreater 0$. For any $\bm{x}=(x_1,x_2,\dots,x_m)\in (\Fq^{m})^*$, we assume $x_{i_0}\neq 0$. Let $d_i=0\in \Fq$ for all $i\in [m]\backslash \{i_0\}$, and $d_{i_0}=1$.
 Then $\bm{d'}=(d_1,d_2,\dots,d_m)\in PG(m-1,q)$ satisfies $\Wt(\bm{d'})=1$ and $\bm{x}\cdot \bm{d'}\neq 0$, which means that $\bm{d'}\in D^c$ and then $\bm{c_x}=(\bm{x}\cdot \bm{d})_{\bm{d}\in D^c}\neq \bm{0}$. So we have $d=\min\limits_{\bm{x}\in(\Fq^{m})^*}\Wt(\bm{c_x})\textgreater 0$.
 Thus, the dimension and minimum distance of $\mathcal{C}_{D^c}$ are $m$ and $d$, respectively.

		By Theorem \ref{suffi_condi_lrc}, the code has locality $(2,q)$. Using the Plotkin bound, we obtain that
		\begin{align*}
			k_{opt}^{(q)}(n-q-1,d)&\leq\left\lfloor \log_q\frac{qd}{qd-(q-1)(n-q-1)}\right\rfloor=\left\lfloor \log_q\frac{qd}{\Delta+q^2}\right\rfloor\leq m-2.
		\end{align*}
		Thus, the code $\mathcal{C}_{D^c}$ is a $k$-optimal $(2,q)$-LRC according to the generalized C-M bound \eqref{CMbound_rdeltaLRC} with $\tau=1$.
	\end{proof}
	\begin{remark}
		By the techniques of graph theory and anti-codes, the authors in \cite[Theorem 10]{Anticode} have obtained these codes as 2-LRCs. And they only proved that these codes are $k$-optimal 2-LRCs for $s=3,m\geq 3,2\leq q\leq 14$. However, in Theorem \ref{thm:weight2}, we show that there are actually infinitely many  $k$-optimal $q$-ary $(2,q)$-LRCs in \cite[Theorem 10]{Anticode} for arbitrary prime power $q$.
	\end{remark}
	\begin{remark}
		In Theorem \ref{thm:weight2}, the condition $\binom{s}{2}(q-1)\leq\frac{q^{m-1}-q}{q-1}$ can be satisfied by taking $m\geq 1+\left\lceil\log_q\left(\binom{s}{2}(q-1)^2+q\right)\right\rceil$ and the condition $0\textless \frac{qd}{\Delta+q^2}\textless q^{m-1}$ can be satisfied by taking $s\leq 2q-\frac{(q-2)^2}{4(q-1)}$, respectively. The former is easy to verify and the reason for the latter is as follows. If $s\leq 2q-\frac{(q-2)^2}{4(q-1)}$, then we have $\Delta+q^2\geq K_2(s-\frac{1}{2}+\frac{1-s}{q};s,q)+q^2=-\frac{4s(q-1)+(q-2)^2}{8}+q^2\geq q$. Note that $d\textless q^{m-1}$ since $\mathcal{C}_{D^c}$ is a punctured simplex code and simplex codes are Griesmer codes. We obtain $0\textless \frac{qd}{\Delta+q^2}\textless q^{m-1}$.
	\end{remark}
    \begin{example}
        Let $q=2$, $m=4$ and $A=\{1,2,3,4\}$, then the code $\mathcal{C}_{D^c}$ defined in Theorem \ref{thm:weight2} is a $2$-ary $2$-LRC with parameters $[9,4,4]$. Its generator matrix is as follows:
        \begin{align*}
         G=\left[\begin{array}{ccccccccc}
             0& 0& 0& 0& 1& 1& 1& 1& 1\\
              0& 0& 1& 1& 0& 0& 1& 1& 1\\
              0& 1& 0& 1& 0& 1& 0& 1& 1\\
              1& 0& 0& 1& 0& 1& 1& 0& 1\\
        \end{array}\right].
        \end{align*}
    By the Plotkin bound, $k^{(2)}_{
opt}(6, 4) \leq  \left\lfloor \log_2 (\frac{8}{2})\right\rfloor = 2$. Hence $\mathcal{C}_{D^c}$ is a $k$-optimal $2$-LRC achieving the generalized C-M bound.
    \end{example}
	By adding all the points in $P_A$ with Hamming weight $1$ into the deleting set $D$, we have the following construction. It is motivated by \cite[Theorem 8]{LuoConstrkoptimalBin}.
	\begin{theorem}\label{thm:wt1and2} 
		Let $m\geq 4$ be a positive integer and $A\subseteq [m]$, where $|A|=s\geq 3$. Denote $D=\{\bm{d}\in P_A: \Wt(\bm{d})\in [2]\},D^c=P_{[m]}\backslash D$. Then the code $\mathcal{C}_{D^c}$ defined in Eq. \eqref{eq:CodeDef} is a $q$-ary $k$-optimal $(2,q)$-LRC with parameters $$[n,k,d]=\left[\frac{q^m-1}{q-1}-(q-1)\binom{s}{2}-s,m,\frac{(q-1)n+1}{q}+\frac{\Delta}{q}\right]$$
		providing that $\binom{s}{2}(q-1)+s\leq\frac{q^{m-1}-q}{q-1}$ and $0\textless \frac{qd}{\Delta+q^2}\textless q^{m-1}$, where $\Delta=K_1(\nearest{s-\frac{1}{2}+\frac{2-s}{q}};s,q)+K_2(\nearest{s-\frac{1}{2}+\frac{2-s}{q}};s,q)$.
	\end{theorem}
	\begin{proof}
		There are $(q-1)^2\binom{s}{2}$ vectors in $L_A$ with Hamming weight $2$ and $(q-1)s$ vectors in $L_A$ with Hamming weight $1$, so $|D|=(q-1)\binom{s}{2}+s$. Thus, the length of $\mathcal{C}_{D^c}$ is $n=|D^c|=\frac{q^m-1}{q-1}-|D|=\frac{q^m-1}{q-1}-(q-1)\binom{s}{2}-s$. 
  
  Next, we determine the value of $\min\limits_{\bm{x}\in (\Fq^{m})^*}\{\sum\limits_{\bm{d}\in D}\sum\limits_{y\in\Fq^*}\omega_p^{\Tr(y\bm{x\cdot d})}\}$.
		For any $\bm{x}\in (\Fq^{m})^*$, we have
		\begin{align*}
			\sum_{\bm{d}\in D}\sum_{y\in\Fq^*}\omega_p^{\Tr(y\bm{x\cdot d})}&=\sum_{\bm{d}\in \Fq^s,\Wt(\bm{d})=1,2}\omega_p^{\Tr(\bm{x}_A\cdot\bm{d})}\\
			&=K_1(a;s,q)+K_2(a;s,q)\\
			&=\frac{q^2}{2}a^2-\left(\frac{2(q-1)qs+q(2-q)+2q}{2}\right)a+\binom{s}{2}(q-1)^2+s(q-1),
		\end{align*}
	where $a=\Wt(\bm{x}_A)$. 	Its axis of symmetry is $a=\frac{2qs-2s+4-q}{2q}=s-\frac{1}{2}+\frac{2-s}{q}\in [1,s]$. Note that $a=\Wt(\bm{x}_A) \in \{0\}\cup[s]$, thus the above function takes its minimum value at $a=\nearest{s-\frac{1}{2}+\frac{2-s}{q}}$. Therefore, we have 
  \begin{align*}
      \Delta:&=\min\limits_{\bm{x}\in (\Fq^{m})^*}\left\{\sum\limits_{\bm{d}\in D}\sum\limits_{y\in\Fq^*}\omega_p^{\Tr(y\bm{x\cdot d})}\right\}\\
      &=K_1(\nearest{s-\frac{1}{2}+\frac{2-s}{q}};s,q)+K_2(\nearest{s-\frac{1}{2}+\frac{2-s}{q}};s,q).\\
      % &\geq K_1(s-\frac{1}{2}+\frac{2-s}{q};s,q)+K_2(s-\frac{1}{2}+\frac{2-s}{q};s,q)\\
      % &=-\frac{4s(q-1)+(q-2)^2}{8}+\frac{q}{2}-\frac{3}{2}.
  \end{align*}
		By Eq. \eqref{eq:calcu_minimum_weight}, we have
		\begin{align*}
			d:=&\min_{\bm{x}\in(\Fq^{m})^*}\Wt(\bm{c_x})=\frac{(q-1)n+1}{q}+\frac{\Delta}{q}.\\
            % &\geq \frac{q^m-\binom{s}{2}(q-1)^2-s(q-1)-\frac{4s(q-1)+(q-2)^2}{8}}{q}+\frac{q}{2}-\frac{3}{2}\\
			%		&\geq\left\lceil q^{m-1}-\frac{(q-1)^2}{q}\binom{s}{2}-\frac{(q-1)s}{q}+\frac{1}{2}\right.\\
			%		&\left. \qquad\qquad\qquad\qquad\qquad\qquad -\frac{3}{2q}-\frac{4s(q-1)+(q-2)^2}{8q} \right\rceil \\
			%		&=q^{m-1}-\left\lfloor \frac{(q-1)^2}{q}\binom{s}{2}+\frac{(q-1)s}{q}-\frac{1}{2}+\frac{3}{2q}\right. \\
			%		&\left. \qquad\quad\qquad\qquad\qquad\qquad\qquad\qquad  +\frac{4s(q-1)+(q-2)^2}{8q}\right\rfloor \\
			%		&=q^{m-1}-\left\lfloor \frac{(2(s-1)(q-1)+q)^2+8(q-1)s-4q+12}{8q}\right\rfloor.
		\end{align*}
Now we prove that $d\textgreater 0$. For any $\bm{x}=(x_1,x_2,\dots,x_m)\in (\Fq^{m})^*$, we assume $x_{i_0}\neq 0$. Let $d_i=1\in \Fq$ for all $i\in [m]\backslash \{i_0\}$, and $d_{i_0}=1-x_{i_0}^{-1}(\sum\limits_{i=1\atop i\neq i_0}^{m}x_id_i)$.
 Then $\bm{d'}=(d_1,d_2,d_3,\dots,d_m)\in PG(m-1,q)$ satisfies $\Wt(\bm{d'})\geq m-1\textgreater 2$ and $\bm{x}\cdot \bm{d'}\neq 0$, which means that $\bm{d'}\in D^c$ and then $\bm{c_x}=(\bm{x}\cdot \bm{d})_{\bm{d}\in D^c}\neq \bm{0}$. So we have $d=\min\limits_{\bm{x}\in(\Fq^{m})^*}\Wt(\bm{c_x})\textgreater 0$.
 Thus, the dimension and minimum distance of $\mathcal{C}_{D^c}$ are $m$ and $d$, respectively.
 
		According to Theorem \ref{suffi_condi_lrc}, the code has locality $(2,q)$. Using the Plotkin bound, we obtain that
		\begin{align*}
			k_{opt}^{(q)}(n-q-1,d)\leq\left\lfloor \log_q\frac{qd}{qd-(q-1)(n-q-1)}\right\rfloor=\log_q\left\lfloor\frac{qd}{\Delta+q^2}\right\rfloor\leq m-2.
		\end{align*}
	 By the generalized C-M bound \eqref{CMbound_rdeltaLRC}, taking $\tau=1$, we have 
  \[m=k \leq 2+k_{opt}^{(q)}(n-q-1,d)=m.\]
  Thus the code $\mathcal{C}_{D^c}$ is a $k$-optimal $(2,q)$-LRC.
	\end{proof}
\begin{remark}
    In \cite[Theorem 8]{LuoConstrkoptimalBin}, Luo and Cao showed that the code $\mathcal{C}_{D^c}$ in Theorem \ref{thm:wt1and2} are $k$-optimal $2$-LRCs for $q=2$ and $m,s$ satisfying $s=3,4$, $m\textgreater s$. In Theorem \ref{thm:wt1and2}, we generalize this result to arbitrary finite field $\Fq$ and obtain new families of $q$-ary $k$-optimal $(2,q)$-LRCs.
\end{remark}
\begin{remark}
In Theorem \ref{thm:wt1and2}, the condition $\binom{s}{2}(q-1)+s\leq\frac{q^{m-1}-q}{q-1}$ can be satisfied by taking $m\geq 1+\left\lceil\log_q(q+\binom{s}{2}(q-1)^2+s(q-1))\right\rceil$ and the condition $0\textless \frac{qd}{\Delta+q^2}\textless q^{m-1}$ can be satisfied by taking $s \leq 2q-\frac{(q-4)^2}{4(q-1)}$, respectively. The former is easy to verify and the reason for the latter is as follows. If $s\leq 2q-\frac{(q-4)^2}{4(q-1)}$, then we have $\Delta+q^2\geq K_1(s-\frac{1}{2}+\frac{2-s}{q};s,q)+K_2(s-\frac{1}{2}+\frac{2-s}{q};s,q)+q^2=-\frac{4s(q-1)+(q-2)^2}{8}+\frac{q}{2}-\frac{3}{2}+q^2\geq q$. Note that $qd\textless q^m$ since $\mathcal{C}_{D^c}$ is a punctured simplex code and simplex codes are Griesmer codes. We have $0\textless  \frac{qd}{\Delta+q^2}\textless q^{m-1}$.
\end{remark}
       \begin{example}
        Let $q=2$, $m=5$ and $A=\{1,2,3,4\}$, then the code $\mathcal{C}_{D^c}$ defined in Theorem \ref{thm:wt1and2} is a $2$-ary $2$-LRC with parameters $[21,5,10]$. Its generator matrix is as follows:
        \begin{align*}
         G=\left[\begin{array}{ccccccccccccccccccccc}
              0& 0& 0& 0& 1& 1& 1& 1& 1& 1& 0& 1& 1& 1& 1& 1 &0 &1  &0  &0 &0   \\
              0& 1& 1& 1& 0& 0& 0& 1& 1& 1& 1& 0& 1& 1& 1& 1 &0 &0  &1  &0 &0   \\
              1& 0& 1& 1& 0& 1& 1& 0& 0& 1& 1& 1& 0& 1& 1& 1 &0 &0  &0  &1 &0   \\
              1& 1& 0& 1& 1& 0& 1& 0& 1& 0& 1& 1& 1& 0& 1& 1 &0 &0  &0  &0 &1   \\
              1& 1& 1& 0& 1& 1& 0& 1& 0& 0& 1& 1& 1& 1& 0& 1 &1 &1  &1  &1 &1   \\
        \end{array}\right].
        \end{align*}
        By the Plotkin bound, $k^{(2)}_{
opt}(18, 10) \leq  \left\lfloor \log_2 (\frac{20}{2})\right\rfloor = 3$. Hence $\mathcal{C}_{D^c}$ is a $k$-optimal $2$-LRC achieving the generalized C-M bound.
    \end{example}
	In the following, we present another family of punctured simplex codes that are not only Griesmer codes but can also be $k$-optimal $(2,q)$-LRCs under certain constraints.
	\begin{theorem}\label{thm:except1} 
		Let $m\geq 3, A \subseteq [m],2\leq |A|=s\leq m-1, D=\{\bm{d}\in P_{A}: \Wt(\bm{d})\in [s]\backslash \{1\}\},D^c=P_{[m]}\backslash D$. Then the code $\mathcal{C}_{D^c}$ defined in Eq. \eqref{eq:CodeDef} is a $q$-ary linear code with parameters $[n,k,d]=\left[\frac{q^m-1}{q-1}-\frac{q^{s}-1}{q-1}+s,m,d=q^{m-1}-q^{s-1}+1\right]$. And the \Review{following hold}: 
		\begin{itemize}
			\item[(1)] $\mathcal{C}_{D^c}$ is a Griesmer code.
			\item[(2)] $\mathcal{C}_{D^c}$ is a $q$-ary $k$-optimal $(2,q)$-LRC.
		\end{itemize}
	\end{theorem}
	\begin{proof}
		There are $(q-1)s$ vectors in $L_A$ with Hamming weight $1$, so $|D|=\frac{q^s-1}{q-1}-\frac{(q-1)s}{q-1}=\frac{q^s-1}{q-1}-s$. Thus, the length of $\mathcal{C}_{D^c}$ is $n=\frac{q^m-1}{q-1}-|D|=\frac{q^m-1}{q-1}-\frac{q^s-1}{q-1}+s$.
  
  Next, we determine the value of $\min\limits_{\bm{x}\in (\Fq^{m})^*}\{\sum\limits_{\bm{d}\in D}\sum\limits_{y\in\Fq^*}\omega_p^{\Tr(y\bm{x\cdot d})}\}$.
		For any $\bm{x}\in (\Fq^{m})^*$, denote $\Wt(\bm x_A)=a$ and we have
		\begin{align*}
			\sum_{\bm{d}\in D}\sum_{y\in\Fq^*}\omega_p^{\Tr(y\bm{x\cdot d})}&=\sum_{\bm{d}\in (\Fq^s)^*,\Wt(\bm{d})\neq 1}\omega_p^{\Tr(\bm{x}_A\cdot\bm{d})}\\
			&=\sum_{\bm{d}\in (\Fq^s)^*}\omega_p^{\Tr(\bm{x}_A\cdot\bm{d})}-\sum_{\bm{d}\in \Fq^s,\Wt(\bm{d})= 1}\omega_p^{\Tr(\bm{x}_A\cdot\bm{d})}\\
			&=\begin{cases}
				q^s-1-K_1(0,s,q), & \text{ if }a=\Wt(\bm{x}_A)=0,\\
				-1-K_1(a,s,q), & \text{ if }a=\Wt(\bm{x}_A)\neq 0,\\
			\end{cases}\\
			&=\begin{cases}
				q^s-1-(q-1)s, & \text{ if }a=\Wt(\bm{x}_A)=0,\\
				-1-(q-1)s+qa, & \text{ if }a=\Wt(\bm{x}_A)\neq 0.\\
			\end{cases}\\
		\end{align*}
	In particular, $a$ can not be $0$ when $s=m$.
		It is easy to verify that its minimum value is $(q-1)(1-s)$, which is attained at $a=1$. By Eq. \eqref{eq:calcu_minimum_weight}, we have
		\begin{align*}
			d:=\min_{\bm{x}\in(\Fq^{m})^*}\Wt(\bm{c_x})=\frac{q-1}{q}n+\frac{1}{q}+\frac{(q-1)(1-s)}{q}=q^{m-1}-q^{s-1}+1\textgreater 0.\\
		\end{align*}
  Thus, the dimension and minimum distance of $\mathcal{C}_{D^c}$ are $m$ and $d$, respectively.  

  (1): Then the following equality holds: 
	\begin{align*}
			\sum_{i=0}^{m-1}\left\lceil\frac{d}{q^i}\right\rceil=\sum_{i=0}^{m-1}\left\lceil\frac{q^{m-1}-q^{s-1}+1}{q^i}\right\rceil=\frac{q^m-1}{q-1}-\frac{q^s-1}{q-1}+s=n,
		\end{align*} which implies that $\mathcal{C}_{D^c}$ is a Griesmer code. 
  
  (2): Note that $|D|=\frac{q^s-1}{q-1}-s\leq \frac{q^{m-1}-q}{q-1}$ since $2\leq s\leq m-1$. Thus, the code $\mathcal{C}_{D^c}$  has locality $(2,q)$ according to Theorem \ref{suffi_condi_lrc}. From $\sum_{i=0}^{m-2}\left\lceil\frac{d}{q^i}\right\rceil=n-1\textgreater n-q-1$ and the Griesmer bound, we obtain that $k_{opt}^{(q)}(n-q-1,d)\leq m-2.$
  By the generalized C-M bound \eqref{CMbound_rdeltaLRC}, taking $\tau=1$, we have 
  \[m=k \leq 2+k_{opt}^{(q)}(n-q-1,d)=m.\]
  Thus the code $\mathcal{C}_{D^c}$ is a $k$-optimal $(2,q)$-LRC.
	\end{proof}
 \begin{example}
     Let  $q=3, m=4$ and $A=\{1,2,3\}$. Then the code $\mathcal{C}_{D^c}$ defined in Theorem \ref{thm:except1} is a $3$-ary $(2,3)$-LRC with parameters $[30,4,19]$. Its generator matrix is as follows:
     \begin{align*}
         G=\left[\begin{array}{cccccccccccccccccccccccccccccc}
0&0&0&0&0&0&0&0&0&0&0&1&1&1&1&1&1&1&1&1&1&1&1&1&1&1&1&1&1&1\\
0&0&0&0&1&1&1&1&1&1&1&0&0&0&0&0&0&0&1&1&1&1&1&1&2&2&2&2&2&2\\
0&1&1&1&0&0&0&1&1&2&2&0&0&0&1&1&2&2&0&0&1&1&2&2&0&0&1&1&2&2\\
1&0&1&2&0&1&2&1&2&1&2&0&1&2&1&2&1&2&1&2&1&2&1&2&1&2&1&2&1&2\\
         \end{array}\right].
     \end{align*} 
$\mathcal{C}_{D^c}$ is both a Griesmer code and a $k$-optimal $(2,3)$-LRC. 
\end{example}
	In the following theorem, we generalize the above result and get $q$-ary $k$-optimal $(2,q)$-LRCs with new parameters.
	\begin{theorem}\label{thm:except1multi}
		Let $m\geq 3$, $t\geq 1$ and $A_1,A_2,\dots,A_t \subseteq [m]$ with $|A_i|=s_i\geq 2 \text{ for all }  i\in [t]$. Denote $D_i=\{\bm{d}\in P_{A_i}: \Wt(\bm{d})\in [s]\backslash \{1\}\}, D=\bigcup\limits_{i=1}^{t}D_i$ and $ 
 D^c=P_{[m]}\backslash D$. Suppose $|A_i\cap \bigcup\limits_{j=1\atop j\neq i}^{t}A_j|\leq 1$ for all $i \in [t]$. Then the code $\mathcal{C}_{D^c}$ defined in Eq. \eqref{eq:CodeDef} is a $q$-ary linear code with parameters $[n,k,d]=[\frac{q^m-1}{q-1}-\sum_{i=1}^{t}\frac{q^{s_i}-1}{q-1}+\sum_{i=1}^t s_i,m,q^{m-1}-\sum_{i=1}^{t} q^{s_i-1}+t]$. 
 It is a $q$-ary $k$-optimal $(2,q)$-LRC providing that $\sum_{i=1}^{t}\frac{q^{s_i}-1}{q-1}-\sum_{i=1}^t s_i\leq\frac{q^{m-1}-q}{q-1}$ and one of the following conditions holds (or both): 
\begin{itemize}           
	\item[(1)]     $0\textless \frac{q^m-\sum_{i=1}^{t}q^{s_i}+tq}{q^2-\sum_{i=1}^t((q-1)(s_i-1))}\textless q^{m-1}$,
	\item[(2)]    $\sum_{i=0}^{m-2}\lceil\frac{d}{q^i}\rceil\textgreater n-q-1$.
\end{itemize}
	\end{theorem}
	\begin{proof}
		There are $(q-1)s_i$ vectors in $L_{A_i}$ with Hamming weight $1$, so $|D_i|=\frac{q^{s_i}-1}{q-1}-s_i$. Note that the Hamming weights of points in $D_i$ are greater than $1$ and $|A_i\cap \bigcup\limits_{j=1\atop j\neq i}^{t}A_j|\leq 1$ for any $i\in [t]$. We have $D_i\cap D_j=\varnothing$ for any $i\neq j\in [t]$. Thus, the code length of $\mathcal{C}_{D^c}$ is $n=\frac{q^m-1}{q-1}-\sum_{i=1}^{t}|D_i|=\frac{q^m-1}{q-1}-\sum_{i=1}^t\frac{q^{s_i}-1}{q-1}+\sum_{i=1}^t s_i$.
  
    Next, we determine the value of $\min\limits_{\bm{x}\in (\Fq^{m})^*}\{\sum\limits_{\bm{d}\in D}\sum\limits_{y\in\Fq^*}\omega_p^{\Tr(y\bm{x\cdot d})}\}$.
		For any $\bm{x}\in (\Fq^{m})^*$, we have
		\begin{align*}
			&\sum_{\bm{d}\in D}\sum_{y\in\Fq^*}\omega_p^{\Tr(y\bm{x\cdot d})} =\sum_{i=1}^{t}\sum_{\bm{d}\in (\Fq^{s_i})^*,\atop\Wt(\bm{d})\neq 1}\omega_p^{\Tr(\bm{x}_{A_i}\cdot\bm{d})}\\
			&=\sum_{i=1}^{t}\left(\sum_{\bm{d}\in (\Fq^{s_i})^*}\omega_p^{\Tr(\bm{x}_{A_i}\cdot\bm{d})}-\sum_{\bm{d}\in \Fq^{s_i},\Wt(\bm{d})= 1}\omega_p^{\Tr(\bm{x}_{A_i}\cdot\bm{d})}\right)\\
			&=\sum_{i=1}^{t}f_i(a_i),
		\end{align*}
		where $a_i=\Wt(x_{A_i})$ and 
		\begin{align*}
			f_i(a_i)&=\begin{cases}
				q^{s_i}-1-K_1(a_i,s_i,q), & \text{ if } a_i=0, \\
				-1-K_1(a_i,s_i,q), & \text{ if } a_i\textgreater 0,\\
			\end{cases}=\begin{cases}
				q^{s_i}-1-(q-1)s_i, & \text{ if } a_i=0,  \\
				-1-(q-1)s_i+qa_i, & \text{ if } a_i\textgreater 0.\\
			\end{cases}
		\end{align*}
	In particular, $a_1,a_2,\dots,a_t$ can not all be $0$ when $\bigcup\limits_{i=1}^{t} A_i=[m]$.
		By choosing $\bm{x}\in (\Fq^m)^*$ such that $a_1=a_2=\dots=a_t=1$, these $t$ functions $f_1,f_2,\dots,f_t$ can achieve their minimum values simultaneously. Thus, $\min\limits_{\bm{x}\in (\Fq^{m})^*}\{\sum\limits_{\bm{d}\in D}\sum\limits_{y\in\Fq^*}\omega_p^{\Tr(y\bm{x\cdot d})}\}=\sum_{i=1}^{t}\big((q-1)(1-s_i)\big)$. By Eq. \eqref{eq:calcu_minimum_weight}, we have 
		\begin{align*}
			d:=\min_{\bm{x}\in(\Fq^{m})^*}\Wt(\bm{c_x})%\\
			=\frac{q-1}{q}n+\frac{1}{q}+\frac{\sum_{i=1}^{t}\big((q-1)(1-s_i)\big)}{q}%\\
			=q^{m-1}-\sum_{i=1}^{t}q^{s_i-1}+t \textgreater 0.%\\
		\end{align*}
 The last inequality is due to the fact that $\sum_{i=1}^{t}(s_i-1)\leq m-1$. Therefore, the dimension and minimum distance of $\mathcal{C}_{D^c}$ are $m$ and $d$, respectively. Since $\sum_{i=1}^{t}\frac{q^{s_i}-1}{q-1}-\sum_{i=1}^t s_i\leq\frac{q^{m-1}-q}{q-1}$, $\mathcal{C}_{D^c}$ has locality $(2,q)$ by Theorem \ref{suffi_condi_lrc}. We then show that the condition (1) and (2) both lead to the $k$-optimality of $\mathcal{C}_{D^c}$.

(1): Using the Plotkin bound, we obtain that
\begin{align*}
	k_{opt}^{(q)}(n-q-1,d)&\leq\left\lfloor \log_q\frac{qd}{qd-(q-1)(n-q-1)}\right\rfloor\\
	&=\Review{\left\lfloor\log_q\frac{q^m-\sum_{i=1}^{t}q^{s_i}+tq}{q^2-\sum_{i=1}^t((q-1)(s_i-1))}\right\rfloor}\\
	&\leq m-2.
\end{align*}
Thus, the code $\mathcal{C}_{D^c}$ is a $k$-optimal $(2,q)$-LRC according to the generalized C-M bound \eqref{CMbound_rdeltaLRC} with $\tau=1$.

(2) Since $\sum_{i=0}^{m-2}\lceil\frac{d}{q^i}\rceil\Review{\textgreater} n-q-1$, we have $k^{(q)}_{opt}(n-q-1, d)\leq  m-2$ by the Griesmer
bound. Thus, the code $\mathcal{C}_{D^c}$ is a $k$-optimal $(2, q)$-LRC according to the generalized C-M
bound \eqref{CMbound_rdeltaLRC} with $\tau = 1$.
	\end{proof}
 \begin{example}
    Let $q=3,m=4$, $A_1=\{1,2,3\},A_2=\{3,4\}$. Then the code $\mathcal{C}_{D^c}$ defined in Theorem \ref{thm:except1multi} is a $3$-ary $(2,3)$-LRC with parameters $[28,4,17]$. Its generator matrix is as follows:
\begin{align*}
    G=\left[
    \begin{array}{cccccccccccccccccccccccccccc}
 0  &0  &0  &0  &0  &0  &0  &0  &0  &1  &1  &1  &1  &1  &1  &1  &1  &1  &1  &1  &1  &1  &1  &1  &1  &1  &1  &1 \\
 0  &0  &1  &1  &1  &1  &1  &1  &1  &0  &0  &0  &0  &0  &0  &0  &1  &1  &1  &1  &1  &1  &2  &2  &2  &2  &2  &2 \\
 0  &1  &0  &0  &0  &1  &1  &2  &2  &0  &0  &0  &1  &1  &2  &2  &0  &0  &1  &1  &2  &2  &0  &0  &1  &1  &2  &2 \\
 1  &0  &0  &1  &2  &1  &2  &1  &2  &0  &1  &2  &1  &2  &1  &2  &1  &2  &1  &2  &1  &2  &1  &2  &1  &2  &1  &2 \\
    \end{array}
    \right].
\end{align*}
    By the Plotkin bound, $k^{(3)}_{
opt}(24, 17) \leq  \left\lfloor \log_3 (\frac{51}{3})\right\rfloor=2$. Hence $\mathcal{C}_{D^c}$ is a $k$-optimal $(2, 3)$-
LRC achieving the generalized C-M bound.
    %Let $q=2,m=5$, $A_1=\{1,2,3,4\},A_2=\{4,5\}$. Then the code $\mathcal{C}_{D^c}$ is a $k$-optimal $2$-LRC with parameters $[19,5,8]$
    %
    % Let $q=3,m=5$, $A_1=\{1,2,3\},A_2=\{3,4,5\}$. Then the code $\mathcal{C}_{D^c}$ is a $k$-optimal $(2,3)$-LRC with parameters $[101,5,65]$
    %
 \end{example}
	In the following, we present two other new families of $q$-ary $k$-optimal $(2,q)$-LRCs, some of which are distance-optimal codes or Griesmer codes.
	\begin{theorem}\label{thm:s=3except2-complex} 
		Let $q\geq 3$, $m\geq 4$, $A\in [m]$ with $|A|=3.$ Denote $D=\{\bm{d}\in P_A: \Wt(\bm{d})\in [3]\backslash \{2\}\},D^c=P_{[m]}\backslash D$. Then the code $\mathcal{C}_{D^c}$ defined in Eq. \eqref{eq:CodeDef} is a $q$-ary $(2,q)$-LRC with parameters $[n,k,d]$,
		where $n=\frac{q^m-1}{q-1}-\frac{q^{3}-1}{q-1}+3(q-1)$,
		$k=m$, and 
		\begin{align*}
			d=\begin{cases}
       3^{m-1}-6,          &\text{ if } q=3,\\
       q^{m-1}-q^2+2q-2,    &\text{ if } q\geq 4.\\
			\end{cases}
		\end{align*}
  The optimality of $\mathcal{C}_{D^c}$ is as follows:
\begin{itemize}
    \item[(1)] The code $\mathcal{C}_{D^c}$ is a $q$-ary $k$-optimal $(2,q)$-LRC with respect to the generalized C-M bound.
    \item[(2)] If $q=3$, then the code $\mathcal{C}_{D^c}$ is distance-optimal.
    \item[(3)] If $q=4$, then the code $\mathcal{C}_{D^c}$ is a Griesmer code.
\end{itemize}
	\end{theorem}
	\begin{proof}
		There are $\binom{3}{2}(q-1)^2$ vectors in $L_A$ with Hamming weight $2$, so $|D|=\frac{q^3-1}{q-1}-\frac{3(q-1)^2}{q-1}=\frac{q^3-1}{q-1}-3(q-1)$. Thus, the length of $\mathcal{C}_{D^c}$ is $n=\frac{q^m-1}{q-1}-|D|=\frac{q^m-1}{q-1}-\frac{q^3-1}{q-1}+3(q-1)$.
  
  Next, we determine the value of $\min\limits_{\bm{x}\in (\Fq^{m})^*}\{\sum\limits_{\bm{d}\in D}\sum\limits_{y\in\Fq^*}\omega_p^{\Tr(y\bm{x\cdot d})}\}$.
		For any $\bm{x}\in (\Fq^{m})^*$, we have
		\begin{align*}
			\sum_{\bm{d}\in D}\sum_{y\in\Fq^*}\omega_p^{\Tr(y\bm{x\cdot d})}&=\sum_{\bm{d}\in (\Fq^3)^*,\Wt(\bm{d})\neq 2}\omega_p^{\Tr(\bm{x}_A\cdot\bm{d})}\\
			&=\sum_{\bm{d}\in (\Fq^3)^*}\omega_p^{\Tr(\bm{x}_A\cdot\bm{d})}-\sum_{\bm{d}\in \Fq^3,\Wt(\bm{d})= 2}\omega_p^{\Tr(\bm{x}_A\cdot\bm{d})}\\
			&=\begin{cases}
				q^3-1-K_2(a,3,q), & \text{ if }a=\Wt(\bm{x}_A)=0,\\
				-1-K_2(a,3,q), & \text{ if }a=\Wt(\bm{x}_A)\neq 0.\\
			\end{cases}\\
		\end{align*}
Note that $0 \leq a=wt(\bm x_A) \leq 3$. By comparing all the four values at $a=0,1,2,3$, we obtain that its minimum value is 
 \begin{align*}
 \Delta:=\begin{cases}
 -4, & (\text{ taking } a=3), \text{ if } q=3,\\  
 -q^2+4q-4, &    (\text{ taking } a=1), \text{ if } q\geq 4.\\
 \end{cases}
 \end{align*}
 
 Using Eq. \eqref{eq:calcu_minimum_weight}, we have 
		\begin{align*}
			d:=\min_{\bm{x}\in(\Fq^{m})^*}\Wt(\bm{c_x})=\frac{q-1}{q}n+\frac{1}{q}+\frac{\Delta}{q}=\begin{cases}
       3^{m-1}-6,             &\text{ if } q=3,\\
       q^{m-1}-q^2+2(q-1), &\text{ if } q\geq 4.\\
			\end{cases}
		\end{align*}
Note that $d\textgreater 0$, the dimension and minimum distance of $\mathcal{C}_{D^c}$ are $m$ and $d$, respectively.	
Since $|D|=\frac{q^3-1}{q-1}-3(q-1)\leq \frac{q^{m-1}-1}{q-1}-1$, the code has locality $(2,q)$ according to Theorem \ref{suffi_condi_lrc}. Using the Plotkin bound, we obtain that
		\begin{align*}
			k_{opt}^{(q)}(n-q-1,d)&\leq\left\lfloor \log_q\frac{qd}{qd-(q-1)(n-q-1)}\right\rfloor \leq \left\lfloor\log_q\frac{q^m}{q^2+\Delta}\right\rfloor \leq m-2.
		\end{align*}
		Thus, the code $\mathcal{C}_{D^c}$ is a $k$-optimal $(2,q)$-LRC according to the generalized C-M bound \eqref{CMbound_rdeltaLRC} with $\tau=1$.

  When $q=3$, the parameters of $\mathcal{C}_{D^c}$ are $[\frac{3^m-1}{2}-7,m,3^{m-1}-6]$. Note that 
  \begin{align*}
      \frac{3^m-1}{2}-7\textless \sum_{i=0}^{m-1}\left\lceil\frac{3^{m-1}-5}{3^i}\right\rceil=\frac{3^m-1}{2}-6.
  \end{align*}
By the Griesmer bound, there exists no $3$-ary linear code with parameters $[\frac{3^m-1}{2}-7,m,3^{m-1}-5]$. Thus, $\mathcal{C}_{D^c}$ is distance-optimal.

  When $q=4$, the parameters of $\mathcal{C}_{D^c}$ are $[\frac{4^m-1}{3}-12,m,4^{m-1}-10]$. Since 
  \begin{align*}
  \frac{4^m-1}{3}-12=\sum_{i=0}^{m-1}\left\lceil\frac{4^{m-1}-10}{4^i}\right\rceil,
  \end{align*}
  $\mathcal{C}_{D^c}$ is a Griesmer code.
\end{proof}
    \begin{example}
     Let $q=3, m=4$, and $A=\{1,2,3\}$. Then the code define in Theorem \ref{thm:s=3except2-complex} is a 3-ary $(2,3)$ LRC with parameters $(33,4,21)$. Its generator matrix is as follows:
\begin{align*}
G=\left[\begin{array}{ccccccccccccccccccccccccccccccccc}
0&0&0&0&0&0&0&0&0&0&0&1&1&1&1&1&1&1&1&1&1&1&1&1&1&1&1&1&1&1&1&1&1\\
0&0&0&1&1&1&1&1&1&1&1&0&0&0&0&0&0&0&0&1&1&1&1&1&1&1&2&2&2&2&2&2&2\\
0&1&1&0&0&1&1&1&2&2&2&0&0&1&1&1&2&2&2&0&0&0&1&1&2&2&0&0&0&1&1&2&2\\
1&1&2&1&2&0&1&2&0&1&2&1&2&0&1&2&0&1&2&0&1&2&1&2&1&2&0&1&2&1&2&1&2\\
\end{array}\right].
\end{align*}
By the Plotkin bound, $k^{(3)}_{
opt}(29, 21) \leq  \left\lfloor \log_3 (\frac{63}{5})\right\rfloor = 2$. Hence $\mathcal{C}_{D^c}$ is a $k$-optimal $(2, 3)$-
LRC achieving the generalized C-M bound.
    \end{example}
	\begin{theorem}\label{thm:s=4except2-complex}
		Let $m\geq 5$, $A_1,A_2\subseteq [m],$ $A=A_1\cup A_2$, where $|A_1|=|A_2|=2$ and $A_1\cap A_2=\varnothing$. Denote $D=P_A\backslash \{\bm{d}\in P_{A_1}\cup P_{A_2}: \Wt(\bm{d})=2\},$ and $ D^c=P_{[m]}\backslash D$. Then the code $\mathcal{C}_{D^c}$ defined in Eq. \eqref{eq:CodeDef} is a $q$-ary linear code with parameters $[n,k,d]$, 
		where $n=\frac{q^m-1}{q-1}-\frac{q^{4}-1}{q-1}+2(q-1)$,
		$k=m$,
		$d=
		%\frac{(q-1)n}{q}+\frac{1}{q}+\frac{-2-(q-1)^2}{q}=
		q^{m-1}-q^{3}+q-2$. It is a $q$-ary $k$-optimal $(2,q)$-LRC provided $q\geq 3$. In addition, it is a Griesmer code when $q=3$.
	\end{theorem}
	\begin{proof}
		There are $\binom{2}{2}(q-1)^2$ vectors in $L_{A_i}$ with Hamming weight $2$, so $|D|=\frac{q^4-1}{q-1}-\frac{2(q-1)^2}{q-1}=\frac{q^4-1}{q-1}-2(q-1)$. Thus, the length of $\mathcal{C}_{D^c}$ is $n=\frac{q^m-1}{q-1}-|D|=\frac{q^m-1}{q-1}-\frac{q^4-1}{q-1}+2(q-1)$.
  
  Next, we determine the value of $\min\limits_{\bm{x}\in (\Fq^{m})^*}\{\sum\limits_{\bm{d}\in D}\sum\limits_{y\in\Fq^*}\omega_p^{\Tr(y\bm{x\cdot d})}\}$.
		For any $\bm{x}\in (\Fq^{m})^*$, we have
		\begin{align*}
			\sum_{\bm{d}\in D}\sum_{y\in\Fq^*}\omega_p^{\Tr(y\bm{x\cdot d})}&=\sum_{\bm{d}\in (\Fq^4)^*}\omega_p^{\Tr(\bm{x}_A\cdot\bm{d})}-\sum_{\bm{d}\in \Fq^2,\Wt(\bm{d})= 2}\omega_p^{\Tr(\bm{x}_{A_1}\cdot\bm{d})}-\sum_{\bm{d}\in \Fq^2,\Wt(\bm{d})= 2}\omega_p^{\Tr(\bm{x}_{A_2}\cdot\bm{d})}\\
			&=F(a_1,a_2),
		\end{align*}
		where $a_1=\Wt(\bm{x}_{A_1}),a_2=\Wt(\bm{x}_{A_2})$ and 
				\begin{align*}
					F(a_1,a_2)=\begin{cases}
						q^4-1-K_2(a_1;2,q)-K_2(a_2;2,q), &\text{ if } a_1=a_2=0,\\
						-1-K_2(a_1;2,q)-K_2(a_2;2,q), &\text{ otherwise. } 
					\end{cases}
				\end{align*}
		By comparing all the nine values at $(a_1,a_2)\in \{0,1,2\}^2$, its minimum value is $-q^2+2q-3$ (taking $a_1=2,a_2=0$). Using Eq. \eqref{eq:calcu_minimum_weight}, we have  
		\begin{align*}
			d&:=\min_{\bm{x}\in(\Fq^{m})^*}\Wt(\bm{c_x})=\frac{q-1}{q}n+\frac{1}{q}+\frac{-q^2+2q-3}{q}=q^{m-1}-q^3+q-2\textgreater 0.
		\end{align*}
        Thus, the dimension and minimum distance of $\mathcal{C}_{D^c}$ are $m$ and $d$, respectively.
		Since $\frac{q^4-1}{q-1}-2(q-1)\leq \frac{q^{m-1}-1}{q-1}-1$, the code $\mathcal{C}_{D^c}$ has locality $(2,q)$ according to Theorem \ref{suffi_condi_lrc}. When $q\geq 3$, by using the Plotkin bound, we obtain that
		\begin{align*}
			k_{opt}^{(q)}(n-q-1,d)&\leq\left\lfloor \log_q\frac{qd}{qd-(q-1)(n-q-1)}\right\rfloor\notag\\
			&\leq \left\lfloor\log_q\frac{q^m-q^4+q^2-2q}{2q-3}\right\rfloor\\
			&\leq \left\lfloor\log_q\frac{q^m-q^4+q^2-2q}{q}\right\rfloor\\
			&\leq m-2.
		\end{align*}
		Thus, the code $\mathcal{C}_{D^c}$ is a $k$-optimal $(2,q)$-LRC according to the generalized C-M bound \eqref{CMbound_rdeltaLRC} with $\tau=1$. When $q=3$, we have 
        \begin{align*}
		   \sum_{i=0}^{m-1} \left\lceil\frac{d}{3^i}\right\rceil{=}\sum_{i=0}^{m-1} \left\lceil\frac{3^{m-1}-26}{3^i}\right\rceil{=}\sum_{i=0}^{3}(3^{\Review{m-i-1}}{-}\frac{27}{3^i}{+}1){+}\sum_{i=4}^{m-1}3^{\Review{m-i-1}}=\frac{3^m-1}{2}{-}36{=}n,  
		\end{align*}
  which means that $\mathcal{C}_{D^c}$ is a Griesmer code.
	\end{proof}
 \begin{example}
     Let $q=4, m=6$, $A_1=\{1,2\},A_2=\{3,4\}$. Then the code $\mathcal{C}_{D^c}$ defined in Theorem \ref{thm:s=4except2-complex} is a $(2,4)$-LRC with parameters $[1286,6,962]$.
By the Plotkin bound, $k^{(4)}_{opt}(1281,962) \leq \left\lfloor \log_4 
\frac{4*962}{5} \right\rfloor=4$. Hence $\mathcal{C}_{D^c}$ is a $k$-optimal $(2,4)$-LRC achieving the generalized C-M bound.
 \end{example}
	The following theorem generalizes the construction of Theorem \ref{thm:weight2}.
	\begin{theorem}\label{thm:wt2mult}
		Let $m\geq 3$, $t\geq 1$, $A_i\subseteq [m]$ with $|A_i|=s_i\geq 2$ for $i\in[t].$ Denote $D_i=\{\bm{d}\in P_{A_i}|\Wt(\bm{x})=2\}$ for $i\in [t]$, $D=\bigcup\limits_{i=1}^{t} D_i$ and $D^c=P_{[m]}\backslash D$. Suppose $|A_i\cap \bigcup\limits_{j=1\atop j\neq i}^{t}A_j|\leq 1$ for all $i\in[t]$.
		Then the code $\mathcal{C}_{D^c}$ defined in \eqref{eq:CodeDef} is a $q$-ary $k$-optimal $(2,q)$-LRC with parameters $$[n,k,d]=\left[\frac{q^m-1}{q-1}{-}(q-1)\sum_{i=1}^t\binom{s_i}{2},m,\frac{(q-1)n+1}{q}+\frac{\Delta}{q} \right]$$
		providing that $(q-1)\sum_{i=1}^t\binom{s_i}{2}\leq\frac{q^{m-1}-q}{q-1} $ and $0\textless\frac{qd}{q^2+\Delta}\textless q^{m-1}$, where $\Delta=\sum_{i=1}^{t}K_2(\nearest{s_i-\frac{1}{2}+\frac{1-s_i}{q}};s_i,q)$.
	\end{theorem}
	\begin{proof}
		For all $i\neq j\in[t]$, $D_i\cap D_j=\{\bm{d}\in P_{A_i\cap A_j}|\Wt(\bm{d})=2\}=\varnothing$ since $|A_i\cap A_{i'}|\leq |A_i\cap \bigcup\limits_{j=1 \atop j\neq i}^{t}A_{j}|\leq 1$ for any $i\neq i' \in [t]$. So we have $|D|=\sum_{i=1}^t|D_i|=(q-1)\sum_{i=1}^t\binom{s_i}{2}$. Thus, the length of $\mathcal{C}_{D^c}$ is $n=\frac{q^m-1}{q-1}-|D|=\frac{q^m-1}{q-1}-(q-1)\sum_{i=1}^t\binom{s_i}{2}$. 
  
  Next, we determine the value of $\min\limits_{\bm{x}\in (\Fq^{m})^*}\{\sum\limits_{\bm{d}\in D}\sum\limits_{y\in\Fq^*}\omega_p^{\Tr(y\bm{x\cdot d})}\}$.
		For any $\bm{x}\in (\Fq^{m})^*$, we have
		\begin{align*}
			\sum_{\bm{d}\in D}\sum_{y\in \Fq^*}\omega_p^{\Tr(y\bm{d\cdot x})}&=\sum_{i=1}^t\sum_{\bm{d}\in D_i}\sum_{y\in \Fq^*}\omega_p^{\Tr(y\bm{d\cdot x})}\\
			&=\sum_{i=1}^t\sum_{\bm{d}\in\Fq^{s_i},\atop \Wt(\bm{d})=2}\omega_p^{\Tr(\bm{x}_{A_i}\cdot\bm{d})}\\
			&=\sum_{i=1}^{t} K_2(a_i;s_i,q),
		\end{align*}
		where $a_i=\Wt(\bm{x}_{A_i})$. In particular, $a_1,a_2,\dots,a_t$ can not all be $0$ when $s=m$. 
		Note that for $i\in [t]$, the axis of symmetry of $K_2(a_i;s_i,q)$ is $a_i=\frac{2(q-1)s_i+2-q}{2q}=s_i-\frac{1}{2}+\frac{1-s_i}{q}\in [1,s_i]$ and $a_i=\Wt(\bm{x}_{A_i})\in \{0\}\cup [s_i]$. 
		Meanwhile, $|A_i\cap \bigcup\limits_{j=1 \atop j\neq i}^{t}A_{j}|\leq 1$ for all $i\in[t]$. So it is possible for us to choose $\bm{x}\in (\Fq^{m})^*$
  such that $a_i=wt(\bm{x}_{A_i})=\nearest{s_i-\frac{1}{2}+\frac{1-s_i}{q}}$ for all $i\in [t]$, in which case
  $K_2(a_i;s_i,q)$ can get the minimum value simultaneously. Thus, we have \begin{align*}
      \Delta:=\min\limits_{\bm{x}\in (\Fq^{m})^*}\left\{\sum\limits_{\bm{d}\in D}\sum\limits_{y\in\Fq^*}\omega_p^{\Tr(y\bm{x\cdot d})}\right\}=\sum_{i=1}^{t} K_2(\nearest{s_i-\frac{1}{2}+\frac{1-s_i}{q}};s_i,q).
  \end{align*}
By \eqref{eq:calcu_minimum_weight}, we have
		\begin{align*}
			d&:=\min_{\bm{x}\in(\Fq^{m})^*}\Wt(\bm{c_x})=\frac{(q-1)n+1}{q}+\frac{\Delta}{q}.\\
			%    &\geq \left\lceil q^{m-1}-\frac{(q-1)^2}{q}\sum_{i=1}^t\binom{s_i}{2}\right.\\
			%    &\left. \qquad\qquad\qquad\qquad \ \ \ \ \ \ \ -\sum_{i=1}^t\frac{4s_i(q-1)+(q-2)^2}{8q}\right\rceil \\
			%    &=q^{m-1}-\left\lfloor \sum_{i=1}^t\frac{(2(s_i-1)(q-1)+q)^2}{8q}\right\rfloor\\
		\end{align*}
Now we prove that $d\textgreater 0$. For any $\bm{x}=(x_1,x_2,\dots,x_m)\in (\Fq^{m})^*$, we assume $x_{i_0}\neq 0$. Let $d_i=0\in \Fq$ for all $i\in [m]\backslash \{i_0\}$, and $d_{i_0}=1$.
 Then $\bm{d'}=(d_1,d_2,\dots,d_m)\in PG(m-1,q)$ satisfies $\Wt(\bm{d'})=1$ and $\bm{x}\cdot \bm{d'}\neq 0$, which means that $\bm{d'}\in D^c$ and then $\bm{c_x}=(\bm{x}\cdot \bm{d})_{\bm{d}\in D^c}\neq \bm{0}$. So we have $d=\min\limits_{\bm{x}\in(\Fq^{m})^*}\Wt(\bm{c_x})\textgreater 0$.
 Thus, the dimension and minimum distance of $\mathcal{C}_{D^c}$ are $m$ and $d$, respectively.
		
	According to Theorem \ref{suffi_condi_lrc}, the code has locality $(2,q)$. Using the Plotkin bound, we obtain that
		\begin{align*}
			k_{opt}^{(q)}(n-q-1,d)&\leq\left\lfloor\log_q\frac{qd}{qd-(q-1)(n-q-1)}\right\rfloor\Review{=}\left\lfloor\log_q\frac{qd}{\Delta+q^2}\right\rfloor\leq m-2.
		\end{align*}
		Thus, the code $\mathcal{C}_{D^c}$ is a $k$-optimal $(2,q)$-LRC according to \eqref{CMbound_rdeltaLRC} with $\tau=1$.
	\end{proof}
\begin{remark}
	In Theorem \ref{thm:wt2mult}, the condition $\sum_{i=1}^t\left(\binom{s_i}{2}(q-1)\right)\leq\frac{q^{m-1}-q}{q-1}$ can be satisfied by taking $m\geq 1+\left\lceil\log_q\left(q+\sum_{i=1}^t\left(\binom{s_i}{2}(q-1)^2\right)\right)\right\rceil$ and the condition $0\textless \frac{qd}{\Delta+q^2}\textless q^{m-1}$ can be satisfied by taking $\sum_{i=1}^{t}s_i\Review{\leq} 2q-\frac{t(q-2)^2}{4(q-1)}$, respectively. The former is easy to verify and the reason for the latter is as follows. If $\sum_{i=1}^{t}s_i\leq 2q-\frac{t(q-2)^2}{4(q-1)}$, then we have $\Delta+q^2\geq \sum_{i=1}^t K_2(s_i-\frac{1}{2}+\frac{1-s_i}{q};s_i,q)+q^2=\sum_{i=1}^{t}-\frac{4s_i(q-1)+(q-2)^2}{8}+q^2\geq q$. Note that $qd\textless q^m$ since $\mathcal{C}_{D^c}$ is a punctured simplex code and simplex codes are Griesmer codes. We have $0\textless \frac{qd}{\Delta+q^2}\textless q^{m-1}$.
\end{remark}
		\begin{example}
			Let $q=2,m=5,A_1=\{1,2,3\}$ and $A_2=\{3,4,5\}$. Then the binary code $\mathcal{C}_{D^c}$ defined in Theorem \ref{thm:wt2mult} has parameters $[25,5,12]$ and the generator matrix is 
   $$G=\left[\begin{array}{ccccccccccccccccccccccccc}
		1 &0 &0 &1 &0 &1 &0 &1 &1 &0 &1 &0 & 1 &0 &1 &1 &0 &1 &0 &0 &1 &0 &1 &0 &1 \\
		0 &1 &0 &1 &0 &0 &1 &1 &0 &1 &1 &0 & 0 &1 &1 &0 &1 &1 &0 &1 &1 &0 &0 &1 &1 \\
		0 &0 &1 &1 &0 &0 &0 &0 &1 &1 &1 &0 & 0 &0 &0 &1 &1 &1 &0 &0 &0 &1 &1 &1 &1 \\
		0 &0 &0 &0 &1 &1 &1 &1 &1 &1 &1 &0 & 0 &0 &0 &0 &0 &0 &1 &1 &1 &1 &1 &1 &1 \\
		0 &0 &0 &0 &0 &0 &0 &0 &0 &0 &0 &1 & 1 &1 &1 &1 &1 &1 &1 &1 &1 &1 &1 &1 &1
	\end{array}\right].$$
			 By the Plotkin bound, $k^{(2)}_{opt}(22,12) \leq \left\lfloor \log_2 (\frac{24}{2}) \right\rfloor=3$. Hence $\mathcal{C}_{D^c}$ is a $k$-optimal $2$-LRC achieving the generalized C-M bound.
	\end{example}
 
	The following theorem generalizes the construction of Theorem 
 \ref{thm:wt1and2}.
 
	\begin{theorem}\label{thm:wt12mult} 
		Let $m\geq 4$, $t\geq 1$, $A_i\subseteq [m]$ with $|A_i|=s_i\geq 3$ for $i\in[t].$ Denote $D_i=\{\bm{d}\in P_{A_i}|\Wt(\bm{x})=1,2\}$ for $i\in [t]$, $D=\bigcup\limits_{i=1}^{t} D_i,$ and $ D^c=P_{[m]}\backslash D$. Suppose $A_i\cap A_j= \varnothing$ for all $i\neq j\in[t]$.
		Then the code $\mathcal{C}_{D^c}$ defined in \eqref{eq:CodeDef} is a $q$-ary $k$-optimal $(2,q)$-LRC with parameters
		\begin{align*}
			[n,k,d]=\left[\frac{q^m-1}{q-1}-\sum_{i=1}^t\bigg((q-1)\binom{s_i}{2}+s_i\bigg),m,\frac{(q-1)n+1}{q}+\frac{\Delta}{q} \right]
		\end{align*}
		providing that $\sum_{i=1}^t\left((q-1)\binom{s_i}{2}+s_i\right)\leq\frac{q^{m-1}-q}{q-1}$ and $0\textless \frac{qd}{\Delta+q^2}\textless q^{m-1}$, where $\Delta=\sum_{i=1}^{t}\left(K_1(\nearest{s_i-\frac{1}{2}+\frac{2-s_i}{q}};s_i,q)+K_2(\nearest{s_i-\frac{1}{2}+\frac{2-s_i}{q}};s_i,q)\right)$. 
	\end{theorem}
	\begin{proof}
		For all $i\neq j\in[t]$, $D_i\cap D_j=\varnothing$ since $A_i\cap A_j=\varnothing$. $|D|=\sum_{i=1}^t|D_i|=(q-1)\sum_{i=1}^t\binom{s_i}{2}+\sum_{i=1}^{t}s_i$. Thus, the length of $\mathcal{C}_{D^c}$ is $n=\frac{q^m-1}{q-1}-|D|=\frac{q^m-1}{q-1}-\sum_{i=1}^t|D_i|=\frac{q^m-1}{q-1}-(q-1)\sum_{i=1}^t\binom{s_i}{2}-\sum_{i=1}^{t}s_i.$
  
  Next, we determine the value of $\min\limits_{\bm{x}\in (\Fq^{m})^*}\{\sum\limits_{\bm{d}\in D}\sum\limits_{y\in\Fq^*}\omega_p^{\Tr(y\bm{x\cdot d})}\}$.
		For any $\bm{x}\in (\Fq^{m})^*$, we have
		\begin{align*}
			\sum_{\bm{d}\in D}\sum_{y\in \Fq^*}&\omega_p^{\Tr(y\bm{d\cdot x})}
			=\sum_{i=1}^t\sum_{\bm{d}\in D_i}\sum_{y\in \Fq^*}\omega_p^{\Tr(y\bm{d\cdot x})}\\
			&=\sum_{i=1}^t\sum_{\bm{d}\in\Fq^{s_i},\Wt(\bm{d})=1,2}\omega_p^{\Tr(\bm{x}_{A_i}\cdot\bm{d})}\\
			&=\sum_{i=1}^{t} \bigg(K_1(a_i;s_i,q)+K_2(a_i;s_i,q)\bigg)\\
			&=\sum_{i=1}^{t}\bigg(\frac{q^2}{2}a_i^2-\big(\frac{2(q-1)qs_i+q(4-q)}{2}\big)a_i+\frac{s_i(s_i-1)}{2}(q-1)^2+s_i(q-1)\bigg),
		\end{align*}
		where $a_i=\Wt(\bm{x}_{A_i})$. 
		Note that for $i\in [t]$, the axis of symmetry of $K_1(a_i;s_i,q)+K_2(a_i;s_i,q)$ is $a_i=\frac{2(q-1)s_i+4-q}{2q}=s_i-\frac{1}{2}+\frac{2-s_i}{q}\in [1,s_i]$ and $a_i=\Wt(\bm{x}_{A_i})\in \{0\}\cup [s_i]$. 
		Meanwhile, $A_i\cap A_j=\varnothing$ for all $i\neq j\in[t]$. So there exists $\bm{x}\in (\Fq^m)^*$ such that $a_i=\Wt(\bm{x}_{A_i})=\nearest{s_i-\frac{1}{2}+\frac{2-s_i}{q}}$ for all $i\in [t]$, in which case $K_1(a_i;s_i,q)+K_2(a_i;s_i,q)$ can get the minimum value simultaneously. 
  Thus, we have 
\begin{align*}
\Delta:=\min\limits_{\bm{x}\in (\Fq^{m})^*}\left\{\sum\limits_{\bm{d}\in D}\sum\limits_{y\in\Fq^*}\omega_p^{\Tr(y\bm{x\cdot d})}\right\}
=\sum_{i=1}^{t}\sum_{w=1}^2\big(K_w(\nearest{s_i-\frac{1}{2}+\frac{2-s_i}{q}};s_i,q)\big).
\end{align*}
  By Eq. \eqref{eq:calcu_minimum_weight}, we have
		\begin{align*}
			d&:=\min_{\bm{x}\in(\Fq^{m})^*}\Wt(\bm{c_x})=\frac{(q-1)n+1}{q}+\frac{\Delta}{q}.\\
%			&\geq q^{m-1}-\left\lfloor \sum_{i=1}^{t}\left(\frac{(q-1)^2}{q}\binom{s_i}{2}+\frac{(q-1)s_i}{q}-\frac{1}{2}+\frac{3}{2q}+\frac{4s_i(q-1)+(q-2)^2}{8q}\right)\right\rfloor \\
		\end{align*}
Now we prove that $d\textgreater 0$. For any $\bm{x}=(x_1,x_2,\dots,x_m)\in (\Fq^{m})^*$, we assume $x_{i_0}\neq 0$. Let $d_i=1\in \Fq$ for all $i\in [m]\backslash \{i_0\}$, and $d_{i_0}=1-x_{i_0}^{-1}(\sum\limits_{i=1\atop i\neq i_0}^{m}x_id_i)$.
 Then $\bm{d'}=(d_1,d_2,d_3,\dots,d_m)\in PG(m-1,q)$ satisfies $\Wt(\bm{d'})\geq m-1\textgreater 2$ and $\bm{x}\cdot \bm{d'}\neq 0$, which means that $\bm{d'}\in D^c$ and then $\bm{c_x}=(\bm{x}\cdot \bm{d})_{\bm{d}\in D^c}\neq \bm{0}$. So we have $d=\min\limits_{\bm{x}\in(\Fq^{m})^*}\Wt(\bm{c_x})\textgreater 0$.
 Thus, the dimension and minimum distance of $\mathcal{C}_{D^c}$ are $m$ and $d$, respectively.

	According to Theorem \ref{suffi_condi_lrc}, the code has locality $(2,q).$ Using the Plotkin bound, we obtain that
		\begin{align*}
			k_{opt}^{(q)}(n-q-1,d)&\leq\left\lfloor\log_q\frac{qd}{qd-(q-1)(n-q-1)}\right\rfloor\\
			&=\left\lfloor\log_q\frac{qd}{\Delta+q^2}\right\rfloor\\
			&\leq m-2.
		\end{align*}
		Therefore, the code $\mathcal{C}_{D^c}$ is a $k$-optimal $(2,q)$-LRC according to \eqref{CMbound_rdeltaLRC} with $\tau=1$.
	\end{proof}
 
	\begin{remark}
		In Theorem \ref{thm:wt12mult}, the condition $\sum_{i=1}^t\left(\binom{s_i}{2}(q-1)+s_i\right)\leq\frac{q^{m-1}-q}{q-1}$ can be satisfied by taking $m\geq 1+\left\lceil\log_q\left(q+\sum_{i=1}^t\left(\binom{s_i}{2}(q-1)^2+s_i(q-1)\right)\right)\right\rceil$ and the condition $0\textless \frac{qd}{\Delta+q^2}\textless q^{m-1} $ can be satisfied by taking $\sum_{i=1}^{t}s_i\leq 2q-\frac{t(q-4)^2}{4(q-1)}$, respectively. The former is easy to verify and the reason for the latter is as follows. If $\sum_{i=1}^{t}s_i\leq 2q-\frac{t(q-4)^2}{4(q-1)}$, then we have $\Delta+q^2\geq \sum_{i=1}^{t}\big(K_1(s_i-\frac{1}{2}+\frac{2-s_i}{q};s_i,q)+K_2(s_i-\frac{1}{2}+\frac{2-s_i}{q};s_i,q)\big)+q^2=\sum_{i=1}^{t}\left(-\frac{4s_i(q-1)+(q-2)^2}{8}+\frac{q}{2}-\frac{3}{2}\right)+q^2\geq q$. Note that $qd\textless q^m$ since $\mathcal{C}_{D^c}$ is a punctured simplex code and simplex codes are Griesmer codes. We have $0\textless \frac{qd}{\Delta+q^2}\textless q^{m-1}$.
	\end{remark}
 
    \begin{example}
        Let $q=4, m=8$, $A_1=\{1,2,3\}$ and $A_2:=\{4,5,6,7,8\}$. Then the code $\mathcal{C}_{D^c}$ defined in Theorem \ref{thm:wt12mult} is a $4$-ary $(2,4)$-LRC with parameters $[21798,8,16346]$. By the Plotkin bound, $k^{(4)}_{
opt}(21793, 16346) \leq  \left\lfloor \log_4 (\frac{4*16346}{5})\right\rfloor = 6$. Hence $\mathcal{C}_{D^c}$ is a $k$-optimal $(2, 4)$-LRC achieving the generalized C-M bound.
    \end{example}
	\section{Concluding Remarks}
	\label{sec:conclusion}
	In this paper, we have investigated new constructions of optimal $(2, \delta)$-LRCs via punctured simplex codes. By using the language of finite geometry, we \Review{proposed} a simple but useful condition to ensure that a linear code has $(2,\delta)$-locality. It's worth noting that the locality of a punctured simplex code can be determined by the size of \Review{the puncturing set}. This allows us to have more flexible choices in constructing various families of locally repairable codes. According to some character sums and Krawtchouk polynomials,  we \Review{obtained} several infinite families of $q$-ary $(2,\delta)$-LRCs. All these codes are optimal with respect to the generalized C-M bound. We not only \Review{generalized} some previous results of 2-LRCs to the $(2, \delta)$-LRCs, but also \Review{constructed} some new optimal $(2, \delta)$-LRCs which are not optimal in the sense of 2-LRCs. Surprisingly, some of these codes are also Griesmer codes or distance-optimal codes.
	
	It is interesting to find more new optimal $(2, \delta)$-LRCs and generalize these results to the $(r,\delta)$-LRCs with $r \geq 3$ in the future.
%\begin{sidewaystable}
%\caption{Tables which are too long to fit, should be written using the ``sidewaystable'' environment as shown here}\label{tab3}
%\begin{tabular*}{\textheight}{@{\extracolsep\fill}lcccccc}
%\toprule%
%& \multicolumn{3}{@{}c@{}}{Element 1\footnotemark[1]}& \multicolumn{3}{@{}c@{}}{Element\footnotemark[2]} \\\cmidrule{2-4}\cmidrule{5-7}%
%Projectile & Energy	& $\sigma_{calc}$ & $\sigma_{expt}$ & Energy & $\sigma_{calc}$ & $\sigma_{expt}$ \\
%\midrule
%Element 3 & 990 A & 1168 & $1547\pm12$ & 780 A & 1166 & $1239\pm100$ \\
%Element 4 & 500 A & 961  & $922\pm10$  & 900 A & 1268 & $1092\pm40$ \\
%Element 5 & 990 A & 1168 & $1547\pm12$ & 780 A & 1166 & $1239\pm100$ \\
%Element 6 & 500 A & 961  & $922\pm10$  & 900 A & 1268 & $1092\pm40$ \\
%\botrule
%\end{tabular*}
%\footnotetext{Note: This is an example of table footnote this is an example of table footnote this is an example of table footnote this is an example of~table footnote this is an example of table footnote.}
%\footnotetext[1]{This is an example of table footnote.}
%\end{sidewaystable}

\bmhead{Acknowledgments}
This research is supported in part by National Key Research and Development Program of China under Grant Nos. 2021YFA1001000 and 2022YFA1004900, the National Natural Science Foundation of China under Grant Nos. 62201322, 12201362, 12001322 and 12231014, the Natural Science Foundation of Shandong Province under Grant Nos. ZR2022QA031 and ZR2021QA043, Taishan scholar program of Shandong Province.

\section*{Declarations}
\noindent
\textbf{Conflict of interest} There are no competing interests that are directly or indirectly related to the work submitted.

\noindent
\textbf{Availability of data and materials} Data sharing not applicable to this article as no datasets were generated or analyzed during the current study.

%%===================================================%%
%% For presentation purpose, we have included        %%
%% \bigskip command. please ignore this.             %%
%%===================================================%%
\bigskip

%\begin{appendices}

%\section{Section title of first appendix}\label{secA1}

%%=============================================%%
%% For submissions to Nature Portfolio Journals %%
%% please use the heading ``Extended Data''.   %%
%%=============================================%%

%%=============================================================%%
%% Sample for another appendix section			       %%
%%=============================================================%%

%% \section{Example of another appendix section}\label{secA2}%
%% Appendices may be used for helpful, supporting or essential material that would otherwise 
%% clutter, break up or be distracting to the text. Appendices can consist of sections, figures, 
%% tables and equations etc.

%\end{appendices}
\bibliographystyle{splnc04}
\bibliography{bibliofile}
%%===========================================================================================%%
%% If you are submitting to one of the Nature Portfolio journals, using the eJP submission   %%
%% system, please include the references within the manuscript file itself. You may do this  %%
%% by copying the reference list from your .bbl file, paste it into the main manuscript .tex %%
%% file, and delete the associated \verb+\bibliography+ commands.                            %%
%%===========================================================================================%%

% common bib file
%% if required, the content of .bbl file can be included here once bbl is generated
%%\input sn-article.bbl

\end{document}